\documentclass[prx,aps,superscriptaddress,footinbib,twocolumn]{revtex4-2}
\usepackage{amsmath}
\usepackage{mathtools}

\newcommand{\abs}[1]{\vert #1 \vert}

\newcommand{\Tr}{\operatorname{Tr}}
\newcommand{\norm}[1]{\Vert #1 \Vert}

\newcommand{\ket}[1]{\vert{ #1 }\rangle}
\newcommand{\bra}[1]{\langle{ #1 }\vert}

\usepackage{amsthm}
\newtheorem{theorem}{Theorem}

\newtheorem{lemma}{Lemma}
\newtheorem{corollary}{Corollary}

\theoremstyle{definition}
\newtheorem{definition}{Definition}
\theoremstyle{remark}

\usepackage{amssymb}


\usepackage{graphicx}
\usepackage{ragged2e}
\usepackage{subcaption}

\makeatletter
\newcommand*\bigcdot{\mathpalette\bigcdot@{.5}}
\newcommand*\bigcdot@[2]{\mathbin{\vcenter{\hbox{\scalebox{#2}{$\m@th#1\bullet$}}}}}
\makeatother


\usepackage{natbib}

\usepackage[colorlinks=true,citecolor=blue,linkcolor=magenta]{hyperref}

\usepackage{xcolor}

\definecolor{cg}{rgb}{0,0.6,0}
\definecolor{co}{rgb}{128, 0, 128}
\begin{document}
	
	\title{Super-bath Quantum Eigensolver}
	
	\author{Tianren Wang}
	\affiliation{Graduate School of China Academy of Engineering Physics, Beijing 100193, China}
	
	\author{Zongkang Zhang}
	\affiliation{Graduate School of China Academy of Engineering Physics, Beijing 100193, China}
	
	\author{Bing-Nan Lu}
	\affiliation{Graduate School of China Academy of Engineering Physics, Beijing 100193, China}
	
	\author{Mauro Cirio}
	\email{cirio.mauro@gscaep.ac.cn}
	\affiliation{Graduate School of China Academy of Engineering Physics, Beijing 100193, China}
	
	\author{Ying Li}
	\email{yli@gscaep.ac.cn}
	\affiliation{Graduate School of China Academy of Engineering Physics, Beijing 100193, China}

	\begin{abstract}
		 The simulation of the dynamics of a system coupled to a low-temperature environment is a promising  application of quantum computers to determine  ground-state  properties of physical systems. However, this approach requires not only the \textit{existence} of an environment that allows the system to dissipate energy and evolve to its ground state, but also the \textit{detailed knowledge} of the properties of the bath. In this paper, we propose a polynomial-time algorithm for ground state preparation which only relies on the \textit{existence} of a physical bath which achieves the same task, while a detailed description of the environment may remain \textit{unknown}. In particular, we show that this  ``super-bath quantum eigensolver algorithm'' prepares the ground state of the system by combining a Gaussian stabilization dephasing procedure with the simulation of the interaction between the system and a super-bath which only requires minimal knowledge of the physical environment. Based on our algorithmic framework, we establish a partial order relation among environments. Supported by experimental lifetime data of nuclear metastable states, we suggest that our algorithm is applicable to determine nuclear ground states in polynomial time. These results highlight the potential advantage of quantum computing in addressing ground state problems in real-world physical systems.	
	\end{abstract}
	
	\maketitle

	\section{Introduction}
	 The possibility to use quantum computers to find the ground state of physical systems is one of their most promising applications because of the potentially broad impact on fields  such as physics, chemistry, and materials science \cite{cao2019quantum,oftelie2021simulating,camino2023quantum,mazzola2024quantum}.
	However, the practical question of whether quantum algorithms can solve for these ground states in polynomial time is currently still an open question \cite{dong2022ground,shang2024polynomial}. The core challenge lies in preparing a quantum state having a sufficient overlap with the ground state. Once such a state is available, quantum phase estimation (QPE) \cite{kitaev1995quantum, dorner2009optimal, nielsen2010quantum, Wan2021A, clinton2024quantum} or other projection algorithms \cite{motta2020determining, huggins2022unbiasing,epperly2022theory,xu2023quantum,Zhang2024measurement} can  be applied to amplify the initial overlap. For this reason, research efforts have been focusing on initializing quantum systems, resulting in different methods such as ansatz state preparation (e.g. Hartree-Fock and Kohn-Sham ground states) \cite{PhysRevX.6.031007,Lee2023}, adiabatic state preparation \cite{Lee2023,whitfield2011simulation}, variational quantum eigensolvers \cite{peruzzo2014variational,PhysRevX.6.031007,google2020hartree,endo2020variational},	
	and the simulation of dissipative dynamics \cite{lloyd,Verstraete2009,wang2011quantum,di2015quantum,	chenu2017quantum,su2020quantum,hu2020quantum,schlimgen2021quantum,cattaneo2021collision,de2021quantum,kamakari2022digital,de2022quantum,suri2023two,wang2023simulating,Delgado-Granados2025}, which has  recently been receiving increased attention \cite{Polla2021,Metcalf2022,Chen2021fast,Layden2023,chen2023efficient,rall2023thermal,Cattaneo2023,movassagh2023preparing,Rost2021,PhysRevResearch.6.043229,Jiang2024,Gilyen2024,Chen2024,Ding2023,rouzé2024optimalquantumalgorithmgibbs,lloyd2024quasiparticle}. While some of these techniques have been numerically shown to work efficiently for small system sizes, their scaling is still an open question \cite{Lee2023}.  As a consequence, their application to achieve a quantum  advantage in solving physical ground states remains uncertain  \cite{Verstraete2009,xu2014demon,chen2023quantum,Cubitt2023}.
	
     Despite the challenges of solving this problem in full generality, it is interesting to note that, in nature, many systems, such as nuclei and molecules, do exist in their ground states \cite{PhysRevLett.131.212501}. These systems typically interact with a low-temperature thermal bath which  brings them to the ground state by allowing a steady dissipation of energy \cite{kuzemsky2022exotic,chen2024local}. A slow decrease of the dissipation rate as a function of the system size is crucial for the observation of ground states in large systems \cite{wen2018two,kim2021theory,harrington2022engineered}. In fact, it is reasonable to assume that systems which are naturally found in the ground state are in contact with a bath whose energy dissipation power  scales polynomially with the system size. For example, we will show that nuclei support such a scaling.

	 In general, one would expect that, whenever such an efficient dissipation exists in nature,  a quantum digital simulation could be achieved with a low time complexity. However, this approach implicitely assumes an accurate model of the bath and the system-bath interaction which would require experimental data and a theoretical analysis. As a consequence, the success of a quantum simulation of a natural bath will critically depend on the accuracy of this procedure, particularly when the energy dissipation rate is sensitive to specific details of the bath model. On the other hand, it might be possible to construct bath models that lead to efficient dissipation without necessarily trying to emulate nature. This could be done by optimizing over a large space of candidate models \cite{schirmer2010stabilizing}. To  reduce the reliance on accurate prior information or on extensive optimizations, here we propose a ``super-bath'' quantum eigensolver algorithm for preparing ground states, which is robust against variations in the bath model.  This is achieved by establishing a partial order relation among environments within our algorithmic framework.

	Suppose a low-temperature bath with sufficiently rich channels to induce transitions between a system's energy levels such that it can drive the system to its ground state in polynomial time. Intuitively, a more complex bath is expected to retain this capability. However, this intuition is flawed, as it holds only when the system's evolution can be described by the Lindblad equation, which often fails due to the breakdown of the rotating wave approximation in many-body systems. Our algorithm advances dissipation-based algorithms for ground state preparation by iteratively simulating an open system dynamics and a randomised time evolution. We demonstrate that short-time evolution can be restored to the Lindblad picture, which is sufficient to empower a super-bath containing a sub-bath with the same dissipation capability in the long-time evolution, up to errors which can be controlled by tuning the algorithm parameters.

	 This paper is structured as follows. In section \ref{sec:notations}, we start by reviewing the general formalism of open quantum systems. In section \ref{sec:gamma decay}, we introduce the gamma decay of nuclei as an example of a physical system evolving towards its ground state thanks to the interaction with its environment. We further provide evidence that the corresponding dissipation rates scale favorably with the system size. In section \ref{sec:superbath eigensolver}, we present the different sub-modules constituting our algorithm. We motivate this structure by showing, in section \ref{sec: robustness}, the main physical intuition behind its capability to  drive the system towards the ground state and follow with a rigorous analysis of the algorithm's performance and efficiency. In section \ref{sec:implementation and scalability}, we raise potential issues about implementation and scalability and finish with our overall conclusions in section \ref{sec:conclusions}. All explicit details about the technical steps of the derivations can be found in the Appendix. 
	
	\section{Notations of open quantum systems}
	\label{sec:notations}

	In this section, we introduce the notation for the description of open quantum systems which we are going to use throughout the text. We refer to Appendix~\ref{app:open_system} for a more in-depth overview of these concepts. 

	In general, an open quantum system consists of a quantum system interacting with an external continuum, such as a thermal bath. The open system can be described by $(H,H_B,H_I,T)$, where $H$ is the Hamiltonian of the system, $H_B$ is the Hamiltonian of the bath, $H_I$ is the interaction Hamiltonian, and $T$ is the temperature of the bath.  Here, we focus on an interaction Hamiltonian which can be expressed in the form $H_I = \sum_{\alpha = 1}^N A_\alpha\otimes B_\alpha$, where $A_\alpha$ and $B_\alpha$ are system and bath Hermitian operators, respectively.  We  define  $\mathbb{A} = \{A_\alpha\vert\alpha = 1,2,\ldots,N\}$ to denote the set of these system operators, which, in general can  represent many-body interactions. For example, for a system of qubits, the elements in $A_\alpha$ can be chosen as a basis for 1-qubit operators while,  in the Fermionic case, one-Fermion operators. Given a choice for the set $\mathbb{A}$, the open system can be described by $(H,H_B,\{B_\alpha\},T)$. As two specific examples, the set $\mathbb{A}$ can be chosen as follows.

\begin{definition}
	{\bf Basis operators of qubit systems.} For a system of $n$ qubits, the basis operator set is 
	\begin{eqnarray}
		\mathbb{A} = \{X_j,Y_j,Z_j \vert j=1,2,\ldots,n\}
	\end{eqnarray}
	where $X_j,Y_j,Z_j$ are Pauli operators acting on the $j$-th qubit. 
	\label{def:BOQS}
\end{definition}

\begin{definition}
	{\bf Basis operators of fermion systems.} For a system of $n$ fermion modes, the basis operator set is 
	\begin{eqnarray}
		\mathbb{A} &=& \{c_j^\dag c_j,c_i^\dag c_j+c_j^\dag c_i,i(c_i^\dag c_j-c_j^\dag c_i) \vert \notag \\
		&& i,j=1,2,\ldots,n,i\neq j\}
	\end{eqnarray}
	where $c_j,c_j^\dag$ are annihilation and creation operators, respectively, of the $j$-th mode. 
	\label{def:BOFS}
\end{definition}
For a Gaussian Bosonic bath, all its effects on the system can be fully	encoded into a matrix-valued spectral density $J_{\alpha,\beta}(\omega)$ which is semi-positive definite for all the frequencies $\omega\geq 0$, and where $\alpha,\beta=1,\dots N$. As a consequence, For a Gaussian Bosonic bath,  the open quantum system can be described by $(H,J,T)$. In fact, the system dynamics up to the time $s$ is determined by  the bath correlation function $C_{\alpha,\beta}(T,J;s)$. It is possible to further define $\gamma_{\alpha,\beta}(T,J;\omega)$ and $\Gamma_{\alpha,\beta}(T,J;\omega)$ as  the Fourier and half-Fourier transforms of $C_{\alpha,\beta}(T,J;s)$, respectively. The correlation functions also determine both the typical time-scale $\tau_R(T,J)$ for the system relaxation and the time-scale $\tau_B(T,J)$ for the bath correlation. Because of the influence of the bath, the system dynamics follows a time-dependent trace-preserving completely positive map $\mathcal{M}$. For a Gaussian Bosonic bath, this map only depends on the system Hamiltonian $H$, the temperature $T$, the spectral density $J$ and the evolution time $t$, i.e. $\rho(t) = \mathcal{M}(H,T,J;t)\rho(0)$, where $\rho(t)$ is the state of the system at the time $t\geq 0$. 

A relevant regime for open quantum systems is when the system is weakly coupled to a large external bath so that the influence of the system on the bath is negligible while the timescale over which the state of the system changes appreciably is large compared to the timescale over which the bath correlation functions decay \cite{breuer2002theory}.  In this case, the so-called Born-Markov approximation is valid and the resulting system dynamics can be described by a Markovian master equation, referred to as the Redfield equation. This equation reads $\dot{\rho}(t) = -i[H,\rho(t)] + \mathcal{K}(T,J;0) \rho(t)$,  in terms of the superoperator $\mathcal{K}(T,J;t)$ describing the effect of the interaction with the bath. For Gaussian Bosonic baths, there is a rigorous upper bound on the error made by imposing the Born-Markov approximation. This error approaches zero in the limit $\tau_B(T,J)/\tau_R(T,J)\rightarrow 0$ \cite{nathan2020universal}. 

Under the condition that the system relaxation rate $1/\tau_R$ is small compared to  the transition energies of $H$,  it is possible to apply rotating-wave approximation to the Redfield equation  which further simplifies to become the Lindblad equation  \cite{breuer2002theory}. The Lindblad equation can be written as $\dot{\rho}(t) = -i[H+H_{LS}(T,J),\rho(t)] + \mathcal{L}(T,J) \rho(t)$,  in terms of  the Lamb shift Hamiltonian $H_{LS}(T,J)$ which satisfy $[H,H_{LS}] = 0$ and the dissipator $\mathcal{L}(T,J)$. This concludes the introduction of open quantum systems. In the next section, we are going to present a specific example where the algorithm we present could be applied. 
	
\section{Gamma decay in nuclei}
	\label{sec:gamma decay}
	
	\begin{figure*}[t]
		\centering
		\includegraphics[width=\linewidth]{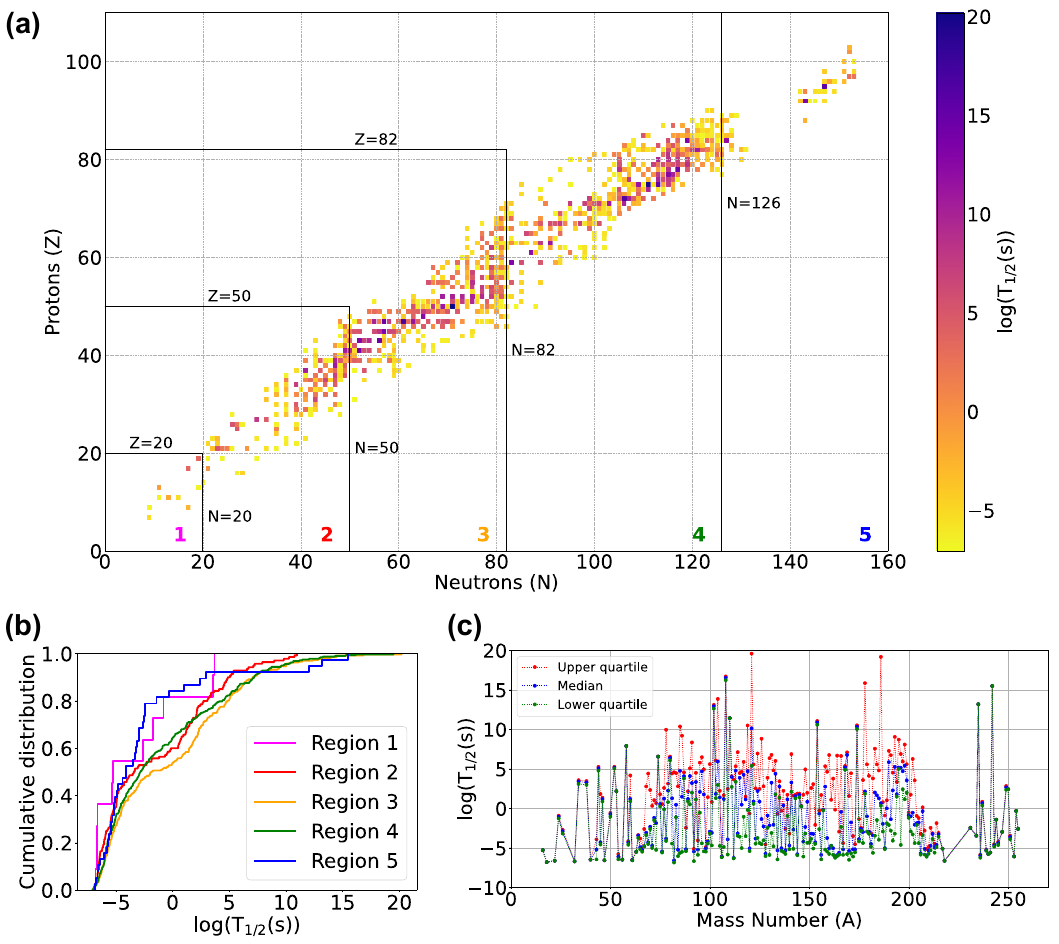}
		\captionsetup{justification=Justified,singlelinecheck=false,font=small}
		\caption{The gamma decay half-lives $\textrm{T}_{1/2}$ of nuclear isomers. If a nuclide has multiple isomers, only the longest-lived state is included. (a) A nuclear chart depicting the half-lives of isomers. Based on the magic quantum numbers of nuclei, namely $20$, $50$, $82$, and $126$, isomers are divided into $5$ regions. (b) The cumulative distribution of half-lives within each of these 5 regions. (c) The quartiles of half-lives for nuclides with the same mass number. The data used to generate this figure is sourced from the NUBASE2020~ \cite{Kondev_2021}, the NuDat~\cite{nudat}, and the conversion coefficient calculator BrIcc~\cite{KIBEDI2008202}. }
		\label{fig:isomer}
	\end{figure*}

	The nuclear excited states can decay via all channels allowed by symmetries, for example, the excitation energies can be carried off by gamma rays, nucleons, light clusters such as $^4$He or surrounding electrons. Heavy nuclei can also fission where the excitation energies are converted into kinetic energies of the fragments. The branching ratio is determined by the structure of both the excited and the ground states. The specific decay mechanism is usually complicated and varies case-by-case. A universal description is still not available and poses a great challenge to the nuclear theorists. Among the various decay mechanisms, the gamma decay, where the nucleus interacts with the electromagnetic field and emits photons carrying specific energy, momentum and angular momentum, is the mostly studied one. The emitted photons can carry important information of the nuclear structure and the gamma-ray spectroscopy has been dubbed the fingerprint of the nucleus. To avoid unnecessary complexity, here we only consider simple gamma decay with the photon energy $E_\gamma\gtrsim 1 \mathrm{keV}$ and half-life $\textrm{T}_{1/2}$. The energy dissipation power of the excited state is defined as 
	\begin{eqnarray}
		P = \frac{E_\gamma\ln 2}{\textrm{T}_{1/2}}. \label{eq:EDP}
	\end{eqnarray}
	Using experimental gamma-decay half-lives, we first investigate how the half-life (thereby, energy dissipation power) varies against the nuclear mass number. 
	
	In Fig.~\ref{fig:isomer} we summarize the gamma decay lifetime throughout the whole chart of nuclide. Generally, the lifetime of the excited states is of the order of $10^{-12}$ seconds. However, some metastable states, called isomers \cite{krane1991introductory}, have significantly longer lifetimes than typical excited states. In the figure, we show the longest isomer half-life of each nucleus. One observation from the data is that the order of gamma-decay half-lives is almost independent of the system size, suggesting favorable scaling of the energy dissipation power. However, for specific nucleus, the longest half-life can vary by a few tenths of orders from $10^{-7}$ to $10^{20}$ seconds. We note that for those nuclei containing isomers with extremely long lifetimes, the natural bath method may be stuck in that state for a long time and becomes inefficient for preparing the corresponding ground state. This issue necessitates the search for a { synthetic} bath that induces rapid dissipation or an algorithm that does not rely on the details of the bath.  
	
		\begin{figure*}[t]
		\centering
		\includegraphics[width=\textwidth]{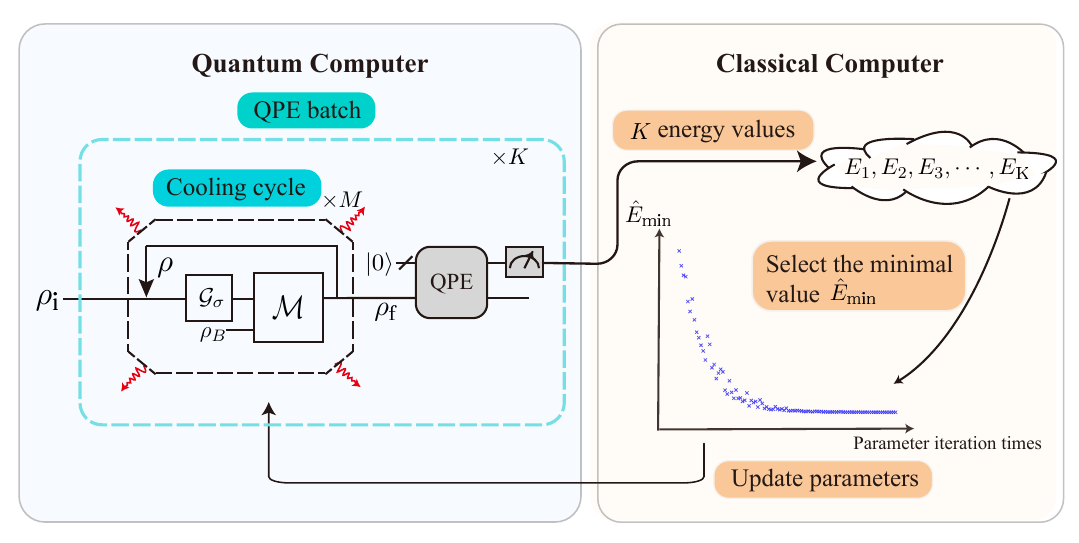} 
		\captionsetup{justification=Justified, singlelinecheck=false, font=small}
		\caption{Schematic diagram of the super-bath quantum eigensolver. The quantum computer receives fixed parameters $(H,\mathbb{A},\mathfrak{J})$, tunable parameters $(T,g,t,\sigma,M)$, and an initial state 
		$\rho_{\text{i}}$. Through the iteration of $M$ cooling cycles --- each comprising a Gaussian stabilization and an open-system dynamics channels --- the system attains the ground state with a finite probability. Consequently a QPE protocol is used to measure the system energy. This entire procedure is repeated $K$ times, thereby generating 
		$K$ energy eigenvalues. The minimal output $\hat{E}_{\text{min}}$ is recorded, followed by an update of the tunable parameters which are fed back into the quantum computer to restart the protocol. The algorithm terminates upon convergence of the energy output, yielding the ground state of the target system.}
		\label{fig: Algorithm_test}
	\end{figure*}

\section{Super-bath quantum eigensolver}
\label{sec:superbath eigensolver}

	In this section, we provide a general overview about the structure of the proposed algorithm as shown in Fig. \ref{fig: Algorithm_test}, which we refer to as the super-bath quantum eigensolver (SQE). Given a quantum system with Hamiltonian $H$, the algorithm estimates  its ground state energy. This is achieved by evolving a trial state through a cooling cycle made out of $M$ repetitions of a dephasing channel and an open dynamics, after which the state energy is evaluated using quantum phase estimation (QPE). This last step involves the classical processing of $K$ energy values which are the output of the previous quantum processing. By selecting the smallest energy, the classical computer can estimate whether the optimal ground state energy has been reached or the parameters defining the quantum part of the algorithm are required to be updated. By repeating this process, the algorithm eventually yields the ground state energy of the system. The efficiency of the algorithm will be discussed in the next section. We are now going to describe each of the parts of the algorithm in more detail.

	 $\bullet \quad${\bf Input.} We define the input of the algorithm as the set of operators and parameters used to tune the different algorithmic components. They can be listed as $(H,\rho_{\text{i}},\mathbb{A},\mathfrak{J},T,g,t,\sigma,M)$. Here, $H$ and $\rho_{\text{i}}$ are the Hamiltonian and the initial state (which is arbitrary and can always be chosen as the maximally mixed state) for the system. The quantities $\mathbb{A}$ and $\mathfrak{J}$  are used to define the open dynamics. In particular,  $\mathbb{A}$ is the set of system interaction operators and $\mathfrak{J}$ is a proper spectral density function (see Definition \ref{def:PSDF}). This open dynamics also depends on the parameters $T$, $g$, and $t$ setting the temperature, the overall scaling for the strength of the system-environment coupling, and the evolution time.  The parameter $\sigma$ is used to tune the dephasing channel, while $M$ counts the times the dephasing and open dynamics channels are iterated within each cooling cycle. The paramers update only involves the parameters $(T,g,t,\sigma,M)$. We are now going to give more details about each of the components introduced above.

	$\bullet \quad${\bf Cooling cycle.} The cooling cycle is the most important part of the algorithm. It consists of $M$ iterations of two operations: a dephasing, Gaussian stabilization  $\mathcal{G}_\sigma$ and a simulation $\mathcal{M}$ for an open system dynamics. We are going to describe these two channels in more detail.
	
	\begin{definition}
		{\bf Gaussian stabilization.}   Given a zero-mean Gaussian probability density function  $g_\sigma(t)$ having standard deviation $\sigma$, the Gaussian stabilization dephasing operation is defined as   
		\begin{eqnarray}
			\mathcal{G}_\sigma \bullet = \int_{-\infty}^{+\infty} dt g_\sigma(t) e^{-iHt} \bullet e^{iHt},
		\end{eqnarray}
		which is a completely positive map.  
		\label{def:GS}
	\end{definition}
	
	Using the map defined above, given an arbitrary state of the system, we can generate an approximate stationary state by evolving it for a randomly chosen evolution time. This is a crucial step for error control of the algorithm. Notice that events with $\abs{t}>10\sigma$ are extremely rare (the probability is about $10^{-23}$). Therefore, the time  required for this Gaussian stabilization is effectively finite.
	
 After this dephasing operation,  we need to simulate the interaction with a Gaussian boson bath, having temperature $T$, and a spectral density given by $g^2J_{\text{S}}$. Here, the positive parameter $g$ controls the overall coupling strength, and $J_{\text{S}}$ is the ``super-bath'' spectral density, defined as follows.
	
	\begin{definition}
		{\bf Super-bath spectral density.} Given a { proper} spectral density function $\mathfrak{J}$, the super-bath spectral density is 
		\begin{eqnarray}
			J_{\text{S}}(\omega) = \mathfrak{J}(\omega)\openone_{\text{N}},
		\end{eqnarray}
		where $\openone_{\text{N}}$ is the identity operator of the dimension $N = \abs{\mathbb{A}}$. 
		\label{def:superBSD}
	\end{definition}
	A proper spectral density is defined as 
	 \begin{definition}
		{\bf Proper spectral density function.} A spectral density function $\mathfrak{J}(\omega)$ is said to be proper if and only if 
		\begin{itemize}
			\item[1.] It must be a non-negative function; 
			\item[2.] $\mathfrak{J}(0) = \mathfrak{J}(\infty) = 0$, and $C(T,\mathfrak{J};0)$ is a bounded function of $T\in(0,1]$; 
			\item[3.] The bath correlation time may increases with $1/T$ but no faster than polynomially, i.e. $\tau_B(T,\mathfrak{J}) = O(\mathrm{Poly(1/T)})$; 
			\item[4.]There are two positive numbers $\tau_{R,m}$ and $\tau_{R,M}$ such that $\tau_{R,m} \leq \tau_R(T,\mathfrak{J}) \leq \tau_{R,M}$ when $T\in(0,1]$. 
		\end{itemize}
		\label{def:PSDF}
	\end{definition}
		In the above definition, the choice of the supremum in the temperature interval (we have taken the supremum $T = 1$)  is arbitrary. 
	
	Intuitively, the super-bath spectral density describes a simplified bath in which all system coupling operators interact with an independent version of a proper, i.e., well behaved, spectral density function $\mathfrak{J}$.  The conditions defining the properties of $\mathfrak{J}$ have the following meanings.
	Condition 1 is a fundamental requirement of spectral density functions. {In fact,} physical spectral densities typically tend to zero at low and high frequency such as the Ohmic, sub-Ohmic, and super-Ohmic baths {and they are also characterizes by high-frequency cut-offs}.  { Furthermore, condition 2 imposes that the integral of the spectral density multiplied by $\coth({\omega}/{2T})$ over positive energies} (i.e., the correlation function at zero time) is finite. {Since}   we employ a specific definition of the bath correlation time to obtain rigorous bounds on various errors in the algorithms, the bath correlation time may be divergent at zero temperature. { While} this divergence is tolerable, we require {it to be slow with }condition 3. Condition 4  {further requires} that the relaxation rate of the system is finite at a finite temperature, which is always true in physical systems. An example of  {a proper} spectral density functions is the super-Ohmic spectral density with exponential cutoff (see Appendix \ref{app: example}).

	This open dynamics defines a map $\mathcal{M}(H,T,g^2J_{\text{S}};t)$ on the system. {Together with the previously introduced} dephasing operation $\mathcal{G}_\sigma$, these maps are repeated  $M$ times, defining the cooling cycle which  allows energy to be dissipated. With these notations,  the output of the cooling cycle is the system state $\rho_{\text{f}} = [\mathcal{M}(H,T,g^2J_{\text{S}};t)\mathcal{G}_\sigma]^M\rho_{\text{i}}$.

	 $\bullet \quad${\bf QPE trial and batch.} { After the cooling cycle, we apply a QPE procedure to measure the state energy.} The measurement outcome is the ground state energy with a probability of $\bra{G} \rho_{\text{f}} \ket{G}$ (up to errors in the QPE), where $\ket{G}$ is the normalised ground state of $H$. When the outcome is the ground state energy, the state $\rho_{\text{f}}$ is projected onto $\ket{G}$. Because the outcome is probabilistic, we repeat the QPE trial for $K$ times to { increase the probability to }observe the ground state. {We call the set of $K$ QPE trials a ``QPE batch''}. 
	 
	  From the K measurement outcomes of each QPE batch, we select the minimum value $\hat{E}_{\text{min}}$ which constitutes the best estimate for the ground state energy. If $\bra{G} \rho_{\text{f}} \ket{G}$ is small, we may fail to observe the ground state in the QPE batch, and $\hat{E}_{\text{min}}$ { will result} higher than the ground state energy. The failure probability is $(1-\bra{G} \rho_{\text{f}} \ket{G})^K$. If we succeed in observing the ground state in the batch, $\hat{E}_{\text{min}}$ is the ground state energy (up to errors in QPE). 
	 
	 Under the condition that $\bra{G} \rho_{\text{f}} \ket{G} \geq \frac{1}{4}$, we can observe the ground state in the batch with a probability higher than $1-\kappa$, { after} $K = O(\log(\kappa))$ {iterations}.
	
 $\bullet \quad${\bf Parameter update.} In the situation that the proper parameters are unknown, we have to find them by updating parameters following certain rules. Fortunately, we know the directions to look for proper parameters: a lower temperature $T$, weaker coupling $g$, more sophisticated dephasing (larger $\sigma$) and more rounds of the cooling cycle (a larger $M$) are always preferred. Therefore, we start with some default initial values of parameters and update them in a loop. At the beginning of the algorithm, we take default initial values of parameters $T = 1/10$, $g = 1/2$, $\sigma = 10$ and $M = 10^3$ (or any other default values). We always take $t = \tau_R(T,J_\text{S})/g$. We start with a finite temperature instead of taking the zero temperature directly because of the bath correlation time: if $\tau_B(T,\mathfrak{J})$ increases with $1/T$, the smaller factor $g$ is required when the temperature is lower in order to justify the Born-Markov approximation, which reduces the efficiency of energy dissipation. With these parameters $(T,g,t,\sigma,M)$, we run a QPE trial batch. After the QPE batch, we update the parameters by taking $T \leftarrow T/2$, $g \leftarrow \lambda_1 g/2$, $\sigma \leftarrow 2\lambda_5\sigma$ and $M \leftarrow 4M/\lambda_1^2$. Here, $\lambda_1,\lambda_4,\lambda_5$ are constants given in Appendix \ref{app: iteration method}, in which we also explain why we update parameters in this way. With the updated parameters, we run another QPE batch; and we repeat the parameter updating-QPE batch loop. We use $L$ to denote the total number of QPE batches, i.e. the parameters are updated for $L-1$ times. 

We stop the parameter-updating loop when confident about $\hat{E}_{\text{min}}$. In practical classical algorithms such as Quantum Monte Carlo method \cite{carlson2015quantum}, Lanczos method \cite{prelovvsek2013ground} and density-matrix renormalization group method \cite{schollwock2005density}, a broadly used approach is stopping the computing when the result becomes stable. We can take the same approach to stop the parameter-updating loop by tracking the value of $\hat{E}_{\text{min}}$. 
	
	\section{Robustness of the algorithm}
	\label{sec: robustness}
	
	The efficiency of the algorithm above depends on whether the super bath induces efficient energy dissipation. We call a bath that induces efficient energy dissipation a good bath. Intuitively, any super-bath that contains this good bath should possess the same capability to dissipate energy,suggesting the existence of a partial order relation among baths: for baths $A$ and $B$, we say  $A \preceq B$ if, for any Hamiltonian that $A$ can induce to the ground state in polynomial time, $B$ can do so as well.

	In this section, we demonstrate that this partial order relation holds under the condition that the rotating wave approximation is valid, thereby allowing the system's evolution to be described by the Lindblad equation. Therefore, if the super-bath is good, we can efficiently compute the ground state energy. On the contrary, we also show that (up to certain conditions), when the super-bath is not good, then a good bath does not exist. In other words, if there exists a physical bath able to efficiently dissipate the system energy in nature, then the super-bath construction provides a good chance to achieve the same result.

	We quantify the efficiency of energy dissipation with its power. Here, we assume the Born-Markov approximation holds, which allows us to intutively justify the algorithm's properties. In the following subsection, we are going to lift such a strong assumption by calculating the error of Redfield equation.  Under the Born-Markov approximation, the {dissipated} power reads  
	\begin{eqnarray}
		P_\mathcal{K}(T,J;\rho) = -\Tr[H\mathcal{K}(T,J;0)\rho],
		\label{eq:power}
	\end{eqnarray}
	where $\rho$ is the state of the system and subscript $\mathcal{K}$ represents the dissipation term in the Redfield equation. {We are now going to characterise this dissipation by taking the minimum power in a set of states and a range of the temperature, as follows. }
	
	\begin{definition}
		{\bf Minimum power.} Given a spectral density $J$ and a temperature supremum $1/\beta$, the minimum power is 
		\begin{eqnarray}
			P_{\mathcal{K},\text{min}}(1/\beta,J) &=& \min\{ P_\mathcal{K}(T,J;\rho) \vert \notag \\
			&& T\in (0,1/\beta], \notag \\
			&& \rho\in \mathbb{S},\bra{G}\rho\ket{G}\leq 1/2 \},
		\end{eqnarray}
		where $\mathbb{S}$ denotes the set of normalized reduced density matrices of the system. 
		\label{def:MP}
	\end{definition}

	In the above definition, we take the minimum in states with $\bra{G} \rho \ket{G} \leq 1/2$. Without this condition on states, the minimum is always (approximately) zero because the power vanishes when the state approaches the ground state. Notice that $\bra{G} \rho \ket{G} \leq 1/2$ is sufficient for observing the ground state in QPE by taking a sufficiently large batch size $K$. 
	{ As mentioned, the main assumption which the presented algorithm relies on, is the existence of a good bath. Explicilty, we define a good bath to have a non-zero minimu power, i.e., 
	\begin{equation}
	P_{\mathcal{K},\text{min}}(1/\beta,J_{\text{good}}) > 0\;.
	\end{equation}
	We are now going to address the following question: Does the existence of a good bath (with spectral density $J_\text{good}$) allow to provide a lower bound on the dissipative capabilities of the super-bath (with spectral density $J_\text{S}$) defining the channel for the open dynamics? To address this, we define a complementary bath with spectral density $J_{\text{C}}=J_{\text{S}} - \eta J_{\text{good}}$, in terms of a positive number $\eta\geq 0$. In other words
	\begin{equation}
	\label{eq:dec}
	J_\text{S}=J_\text{C} + \eta J_{\text{good}}\;.
	\end{equation}

	This complementary bath is physical when $J_{\text{C}}(\omega)$ is semi-definite positive, which ultimately requires $\eta \leq \mathfrak{J}(\omega)/\norm{J_\text{good}(\omega)}_\infty$, where $\norm{\bullet}_\infty$ is the spectral norm. For example, a conservative choice is to consider $\eta = \min_{\omega\geq 0} \mathfrak{J}(\omega)/\norm{J_\text{good}(\omega)}_\infty$.  } 
	
	 In order to make progress and to provide an intuitive explanation of the main algorithm, here we further assume the rotating-wave approximation so that all the effects of the bath we considered can be described in terms of Lindblad equations. This assumption will be relaxed in the later discussions.
	Under the rotating-wave approximation, the dissipation power for the super-bath can be written as 
	\begin{eqnarray}
		P_\mathcal{L}(T,J_{\text{S}};\rho) = -\Tr[H\mathcal{L}(T,J_{\text{S}})\rho_s]\;.
	\end{eqnarray}
	According to the decomposition above, we further have
	\begin{equation}
	P_\mathcal{L}(T,J_{\text{S}};\rho) = \eta P_\mathcal{L}(T,J_\text{good};\rho)+P_\mathcal{L}(T,J_{\text{C}};\rho)\;.
	\end{equation}
	Under the definition of $a$ provided above, the complementary bath is physical so that, $P_\mathcal{L}(T,J_{\text{C}};\rho)\geq 0$ at low temperatures such that thermal transitions increasing the energy are sufficiently suppressed.  This implies that 
	\begin{equation}
		\label{eq:PaP}
	P_\mathcal{L}(T,J_{\text{S}};\rho) \geq \eta P_\mathcal{L}(T,J_\text{good};\rho)\;,
	\end{equation}
	which is also positive since $\eta \geq 0$ and by the definition of a good bath. This shows how the existence of a good bath allows to conclude a positive dissipation power for the super-bath in terms of the efficiency parameter $\eta$.

	\begin{figure}[hbtp]
		\centering
		\captionsetup[subfigure]{justification=raggedright, singlelinecheck=false}
		\begin{subfigure}[t]{0.45\textwidth}
			\centering
			\caption{} 
			\includegraphics[width=\columnwidth]{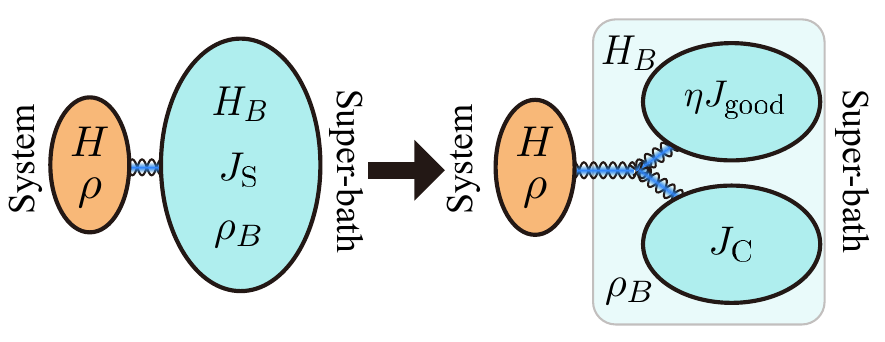}
			\label{fig:a}
		\end{subfigure}
		\hfill
		\begin{subfigure}[t]{0.45\textwidth}
			\centering
			\caption{} 
			\includegraphics[width=\columnwidth]{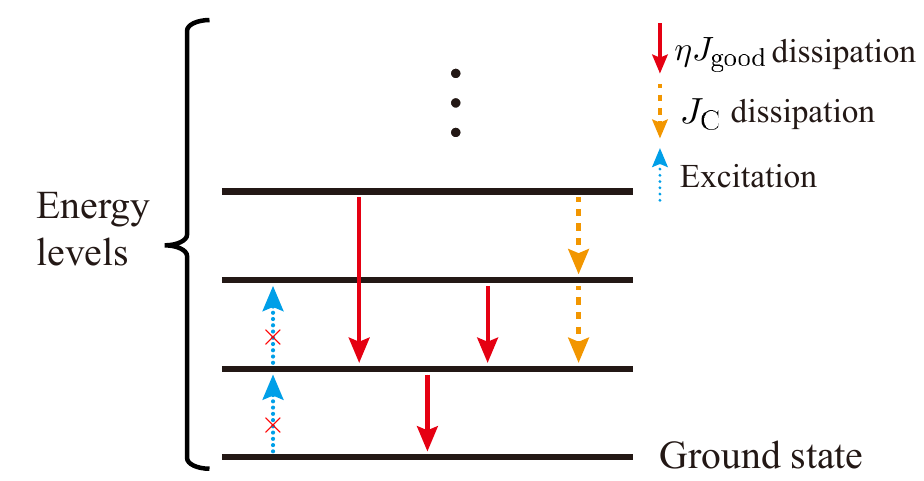}
			\label{fig:b}
		\end{subfigure}
		\captionsetup{justification=Justified, singlelinecheck=false, font=small}
		\caption{(a) The coupling between the system and the super-bath can be decomposed into two parts, 
			$\eta J_{\text{good}}$ and $J_{\text{C}}$. (b) System transitions triggered by the coupling to the super-bath. The red arrows represent energy dissipation caused by the good bath, driving the system to the ground state. The yellow dashed arrows represent energy dissipation caused by the complementary bath, which lowers the energy but may not lead to the ground state. The blue dotted arrows represent excitation caused by the bath. When the bath is at zero temperature, the excitation is prevented according to the Lindblad equation.}
		\label{fig:transition}
	\end{figure}
	
	 As mentioned, this result relies on the rotating-wave approximation which, in general, might not hold. We addressed this by the Gaussian stabilization channel (see Definition \ref{def:GS}), which prepares an approximate stationary state, and allows to lift the rotating-wave assumption.

	 Our key observation is that if the state $\rho_s$ is a stationary state, both the Redfield equation and the Lindblad equation predict the same energy dissipation power, i.e. $P_\mathcal{K}(T,J_\text{S};\rho_s) = P_\mathcal{L}(T,J_\text{S};\rho_s)$.  To see this, we can write $H = \sum_m E_m \Pi_m$ as the spectral decomposition of the system Hamiltonian,  in terms of its eigenvalues $E_m$  and corresponding eigenspace projectors $\Pi_m$.	With this notation, a stationary state can be written as $\rho_s = \sum_m \rho_m$, where $\rho_m$  belongs to the subspace  defined by $\Pi_m$, i.e. $\rho_m = \Pi_m\rho_m\Pi_m$.  Stationary states evolve trivially under the system Hamiltonian. Taking one term from each of $\mathcal{K}$ and $\mathcal{L}$ as an example, the energy dissipation rates are
	\begin{eqnarray}
		P_\mathcal{K}(T,J_\text{S};\rho_s) = -\Gamma_{\alpha,\beta}(\omega)D_{\alpha,\beta}(\omega,\omega') + \cdots
	\end{eqnarray}
	and 
	\begin{eqnarray}
		P_\mathcal{L}(T,J_\text{S};\rho_s) = -\delta_{\omega,\omega'}\Gamma_{\alpha,\beta}(\omega)D_{\alpha,\beta}(\omega,\omega') + \cdots,
	\end{eqnarray}
	where $D_{\alpha,\beta}(\omega,\omega') = \Tr[HA_\beta(\omega)\rho_s A_\alpha^\dag(\omega')]$, and 
	\begin{eqnarray}
		A_\alpha(\omega) = \sum_{m,n\vert E_n-E_m=\omega} \Pi_m A_\alpha \Pi_n.
	\end{eqnarray}
	 Now, since  $\rho_s$ is a stationary state, we have that $D_{\alpha,\beta}(\omega,\omega') = \delta_{\omega,\omega'}D_{\alpha,\beta}(\omega,\omega')$, which holds for all the terms present in the decompositions for both 
	$\mathcal{K}$ and $\mathcal{L}$.  This implies that, when applied to stationary states, the dissipation powers are the same, i.e., 
	\begin{equation}
	P_\mathcal{K}(T,J_\text{S};\rho_s) =P_\mathcal{L}(T,J_\text{S};\rho_s) \;.
	\end{equation}
	
 In conclusion, for the energy dissipation power of a stationary state, the Redfield equation and the Lindblad equation are equivalent, regardless of the rotating-wave approximation.  This allows to conclude that the bound in Eq.~(\ref{eq:PaP}) is valid even without the requirement of a rotating-wave approximation.

	As a summary, we have intuitively demonstrated the cooling efficiency of the super-bath under the rotating wave and Lindblad approximations. By dividing the superbath into a good bath with efficient energy dissipation and a complementary generic bath, we showed that the super-bath and the good bath have the same energy dissipation efficiency, up to an efficiency  factor. In what follows, we show that even when the rotating wave approximation does not hold, this conclusion remains valid within our optimized algorithmic framework, as long as the parameters lie within an appropriate range.

	\subsection*{Main result}
	
In this subsection, we provide a rigorous derivation of the results intuitively argued in the previous section, without the need of extra assumption such as the rotating wave approximation. To do this, we are going to introduce the concept of a `sub-bath', and use it to prove Lemma \ref{lem:super} and, based on this, demonstrate the efficiency of our algorithm.
	
	\begin{definition}
		{\bf Sub-bath spectral density.} A spectral density  $J_{\text{sub}}$ is said to be a sub-bath spectral density with a constraint factor of $b$ if and only if 
		\begin{itemize}
			\item[1.] For all $\omega$,  $J_{\text{sub}}(\omega)$ is positive semi-definite; 
			\item[2.] For all $\omega$, 
			\begin{eqnarray}
				\norm{ J_{\text{sub}}(\omega)}_\infty\leq b\mathfrak{J}(\omega);
			\end{eqnarray}
			\item[2.] For all $T>0$, 
			\begin{eqnarray}
				\label{eq: 4}
				\tau_R(T,J_{\text{sub}}) &\geq& b^{-1}\tau_R(T,\mathfrak{J}), \\
				\label{eq: 5}
				\tau_B(T,J_{\text{sub}}) &\leq& b\tau_B(T,\mathfrak{J}).
			\end{eqnarray}
		\end{itemize}
		\label{def:subBSD}
	\end{definition}

	In this definition, the first condition states that the sub-bath spectral density is a valid spectral density. The second condition corresponds to assume the existence of a decomposition of the  super bath $J_{\text{S}}$ into a sub-bath $J_{\text{sub}}$ and a complementary bath $J_{\text{C}}$ in terms of an efficiency factor $b$. In turn, this property allow us to clarify the logic behind the decomposition in Eq.~(\ref{eq:dec}). If we assume the existence of a good bath which is also a sub-bath, then the decomposition in Eq.~(\ref{eq:dec}) holds for all $\eta\leq 1/b$. Under the third condition, the Born-Markov approximation can be applied to the bath $J_\text{sub}$ with an error comparable to the error for the super bath (depending on $b$). If the factor $b$ is larger, the component of  $J_\text{sub}$  in the super bath is smaller (the energy dissipation power of the super bath is potentially smaller), and the error in the Refiled equation for the bath  $J_\text{sub}$  is potentially larger. 
	
	According to our previous analysis, we expect the super bath to cause the system to dissipate energy at a power of $(g^2/b)P_{\mathcal{K},\text{min}}(1/\beta,J_{\text{S}})$ under certain approximations.  The lifting of these approximations may potentially reduce the total power of energy dissipation and we are going to take this into account. For example, there is an error between the exact time evolution and the Redfield equation (note that the energy dissipation power defined in Eq. (\ref{eq:power}) is based on the Redfield equation). Second, Eq. (\ref{eq:power}) defines the instance energy dissipation power, and there is an error when we use it for a finite time interval. Third, the state $\mathcal{G}_\sigma\rho$ is not exactly a steady state. Fourth, even if the initial state $\mathcal{G}_\sigma\rho$ is a steady state, the state may deviate from steady states during the time evolution. Finally, the complementary bath may excite the system due to the finite temperature. We  take into account all these potential errors in the following lemma and illustrate that we can bound the total error by scaling the parameters of the algorithm.  the total error by taking proper parameters. The proof is given in Appendix \ref{app: proof of lemma1}, in which we provide more details on the scaling of the parameters. 
	
	\begin{lemma}
	\label{lem:super}
	Suppose there exists a good sub-bath spectral density $J_{\textup{g-sub}}$ with a constraint factor of $b$ and a minimum power $P_{\mathcal{K},\textup{min}}(1/\beta,J_{\textup{g-sub}})$. For all states $\rho\in \mathbb{S}$ satisfying $\bra{G}\rho\ket{G}\leq 1/2$, the energy reduced in one round of the cooling cycle has a lower bound 
	\begin{eqnarray}
		\Tr(H\rho) - \Tr[H\mathcal{M}(H,T,g^2J_{\textup{S}};t)\mathcal{G}_\sigma\rho] \geq& Pt - \epsilon,
		\label{eq:dissipation_bound}
	\end{eqnarray}
	where $P = (g^2/b)P_{\mathcal{K},\textup{min}}(1/\beta,J_{\textup{g-sub}})$ . The error term has an upper bound $\epsilon\leq Pt/2$ if we take proper parameters 
	\begin{eqnarray}
		1/T,1/g,\sigma = O(\mathrm{Poly}(N,b,r,h,\beta)),
	\end{eqnarray}
	where $N = \abs{\mathbb{A}}$ is the size of the operator set, $r = 1/P_{\mathcal{K},\textup{min}}(1/\beta,J_{\textup{g-sub}})$   and $h = \norm{H}_\infty$. 
	\end{lemma}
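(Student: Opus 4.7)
The plan is to decompose the energy change $\Delta E := \Tr(H\rho) - \Tr[H \mathcal{M}(H,T,g^2 J_{\text{S}}; t)\mathcal{G}_\sigma\rho]$ into an ideal stationary-state contribution of size $Pt$ and four correction terms, one for each source of inaccuracy flagged in the paragraph preceding the lemma. The backbone is the identity already sketched intuitively above: on a stationary state $\rho_s$ of $H$ one has $P_\mathcal{K}(T, g^2 J_{\text{S}}; \rho_s) = P_\mathcal{L}(T, g^2 J_{\text{S}}; \rho_s)$, and by condition~2 of Definition~\ref{def:subBSD} the decomposition $J_{\text{S}} = (1/b) J_{\text{g-sub}} + J_{\text{C}}$ with $J_{\text{C}} \succeq 0$ gives $P_\mathcal{L}(T, g^2 J_{\text{S}}; \rho_s) \geq (g^2/b) P_\mathcal{K}(T, J_{\text{g-sub}}; \rho_s)$, which is at least $P$ whenever $\bra{G}\rho_s\ket{G} \leq 1/2$ and $T \leq 1/\beta$. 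The hypothesis $\bra{G}\rho\ket{G} \leq 1/2$ transfers to $\mathcal{G}_\sigma\rho$ for free, since $\ket{G}$ is an eigenstate of $H$ and hence the ground-state projector commutes with each unitary $e^{-iHt}$ averaged by $g_\sigma$.

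Next I would introduce the four error bounds. Writing $\rho_R(s)$ for the Redfield-equation flow starting from $\mathcal{G}_\sigma\rho$, I split $\Delta E - Pt$ into: (i) $\Delta_1 = \abs{\Tr[H(\mathcal{M}\mathcal{G}_\sigma\rho - \rho_R(t))]}$, the Born-Markov error, bounded via the Nathan--Rudner-type estimate by $O(h t \cdot g^2 \tau_B/\tau_R)$ and made small by taking $g$ small (condition~3 of Definition~\ref{def:PSDF} keeps $\tau_B = O(\mathrm{Poly}(1/T))$); (ii) $\Delta_2$, which replaces $\int_0^t P_\mathcal{K}(T, g^2 J_{\text{S}}; \rho_R(s))\,ds$ by $t\,P_\mathcal{K}(T, g^2 J_{\text{S}}; \mathcal{G}_\sigma\rho)$, controlled through a Lipschitz bound of $P_\mathcal{K}$ in the state (linear in $h$ and in the coupling) together with a drift bound $\norm{\rho_R(s) - \mathcal{G}_\sigma\rho}_1 = O(g^2 s/\tau_R)$; (iii) $\Delta_3$, which accounts for $\mathcal{G}_\sigma\rho$ being only an approximate stationary state, controlled by the Gaussian-tail suppression of coherences between distinct energy eigenspaces at rate $O(e^{-\sigma^2\delta^2/2})$ for each gap $\delta$, with only $\mathrm{Poly}(N, h)$ gaps contributing to $P_\mathcal{K}$; and (iv) $\Delta_4$, which bounds the possible negativity of the complementary-bath contribution $P_\mathcal{L}(T, g^2 J_{\text{C}}; \rho_s)$ by its thermal-excitation component, suppressed at low temperature by a factor polynomial in $1/T$ (or exponential in $\beta\Delta_{\min}$ when there is a gap).

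Requiring each $\Delta_i \leq Pt/8$ forces $1/T, 1/g, \sigma = O(\mathrm{Poly}(N, b, r, h, \beta))$, matching the scaling claimed; the choice $t = \tau_R(T, J_{\text{S}})/g$ is what makes the comparison of $Pt$ with $\epsilon$ meaningful, since it is the intrinsic relaxation time rescaled by the weak-coupling factor. The main obstacle I expect is $\Delta_2$: unlike the other three errors it involves the \emph{dynamical} drift of $\rho_R(s)$ away from the stationary manifold over an interval of length comparable to $\tau_R$, during which one might naively worry that the Redfield dissipator changes the state by an $O(1)$ amount. The resolution is a self-consistent estimate: bound the superoperator norm of the Redfield dissipator by $O(g^2/\tau_R)$, deduce the drift estimate above, insert it into the Lipschitz variation of $P_\mathcal{K}$, and verify that with $t = \tau_R/g$ the accumulated error is a small fraction of $Pt$. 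All remaining book-keeping (positivity of $J_{\text{C}}$, precise decay rate of $g_\sigma$, Born-Markov constant, polynomial growth of $\tau_B$) then fits inside the polynomial envelope, and the bound $\epsilon \leq Pt/2$ follows.
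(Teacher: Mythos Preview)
Your decomposition into four error sources mirrors the paper's strategy and your handling of $\Delta_1$, $\Delta_2$, and $\Delta_4$ is essentially what the paper does (the paper truncates a Dyson expansion for $\Delta_2$ rather than using a Lipschitz/drift argument, but the resulting bound $h(e^{g^2t/\tau_R}-1-g^2t/\tau_R)$ is of the same order). The real issue is $\Delta_3$.

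Your control of $\Delta_3$ rests on the assertion that ``only $\mathrm{Poly}(N,h)$ gaps contribute to $P_\mathcal{K}$''. This is not justified, and for a generic many-body Hamiltonian it is false: the number of distinct level spacings is exponential in the system size, and arbitrarily many pairs $(\omega,\omega')$ with small $|\omega-\omega'|$ can occur. Consequently $\mathcal{G}_\sigma\rho$ is \emph{not}, in general, trace-norm close to an exactly stationary (energy-diagonal) state, so you cannot invoke the identity $P_\mathcal{K}(\rho_s)=P_\mathcal{L}(\rho_s)$ directly. This is precisely the failure of the rotating-wave approximation that the lemma is designed to circumvent.

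The paper resolves this by inserting an auxiliary coarse-graining scale $\delta$. It first shows (Lemma~\ref{lem: delta steady state}) that $\mathcal{G}_\sigma\rho$ is close in trace norm to a \emph{$\delta$-stationary} state $\rho_s$ --- one that is only block-diagonal with respect to energy windows of width $2\delta$ --- with an error depending on $h,\sigma,\delta$ but \emph{not} on the number of eigenvalues. It then replaces $H$ by a coarse-grained Hamiltonian $H_\delta(x)$, on whose spectrum the Redfield dissipator acting on a $\delta$-stationary state reduces exactly to a Lindblad-type expression (Lemmas~\ref{lem: cg energy transfer} and~\ref{lem: energy increase}), at the price of two further error terms proportional to $\delta$ (one for $J_{\text{S}}$, one for $J_{\text{g-sub}}$). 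The final $\epsilon$ therefore has six terms, each forced below $Pt/12$, and $\delta$ is an internal parameter tuned alongside $T,g,\sigma$. Contrary to your expectation, $\Delta_2$ is routine; it is this coarse-graining step behind $\Delta_3$ that carries the technical weight of the proof.
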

	
	According to the above lemma, the super-bath algorithm is efficient provided an efficient sub-bath exists, which demonstrates the partial order relation among baths. The only problem is to find the proper parameters. However, since these parameters are also shown to scale polynomially in the system size, they are expected to be found efficiently by the update procedure of the algorithm. We note that some constraints exist among the parameters (see Appendix \ref{app: proof of lemma1}). In the algorithm, the parameter updating rule is designed to satisfy the constraints. 
	
	With the above observation about finding proper parameters in the parameter updating loop, we can analyse the time complexity of the algorithm. The overall complexity is the product of a few factors, including the number of QPE bathes $L$ (i.e. parameters are updated for $L-1$ times), the size of each bath $K$, the number of cooling cycles in each QPE trial $M$ and the time complexity of each cooling cycle (increasing with $t$ and $\sigma$). Notice that when we update parameters, the time complexity of the QPE batch increases. Therefore, we focus on the last round of QPE batch for the factors $M$, $t$ and $\sigma$. We summarise these factors in the following theorem. See Appendix \ref{app: proof of lemma1} for the proof. 
	
	\begin{theorem}
	Let $1/\beta$ be the temperature such that 
	\begin{eqnarray}
		\frac{\bra{G} e^{-\beta H} \ket{G}}{\Tr_B(e^{-\beta H})} = \frac{3}{4}.
		\label{eq:beta}
	\end{eqnarray}
	Suppose there exists a sub-bath spectral density $J$ with a constraint factor of $b$ and a minimum power $P_{\mathcal{K},\textup{min}}(1/\beta,J)$. Let $\kappa$ be a positive number. In the last round of the QPE batch, the ground state can be observed with a probability larger than $1-\kappa$ with 
	\begin{eqnarray}
		L = O(\mathrm{Polylog}(N,b,r,h,\beta)).
	\end{eqnarray}
	The size of each QPE batch is 
	\begin{eqnarray}
		K = O(\log(\kappa)).
	\end{eqnarray}
	In the last round of the QPE batch, 
	\begin{eqnarray}
		M,t,\sigma = O(\mathrm{Poly}(N,b,r,h,\beta)).
	\end{eqnarray}
	\label{the:complexity}
	\end{theorem}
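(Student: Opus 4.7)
The plan is to decompose the complexity bound into three independent pieces: the number of parameter-update rounds $L$, the cooling-cycle count $M$ (together with the per-cycle parameters $t$ and $\sigma$) in the final batch, and the QPE batch size $K$. Lemma~\ref{lem:super} handles the per-cycle energy dissipation, while a standard geometric-trials argument handles $K$. Throughout, I would use the gap estimate $E_1-E_0\ge \ln(3)/\beta$ forced by the defining condition~\eqref{eq:beta}, which quantifies how low a temperature and how thermally suppressed a ground-state leakage is required.

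For the $L$-bound I would note that the update rule rescales each of $1/T$, $1/g$, $\sigma$, and $M$ by a fixed constant at every iteration (e.g.\ $1/T$ doubles, $\sigma$ grows by $2\lambda_5$, and $M$ grows by $4/\lambda_1^2$). Hence after $L-1$ updates these parameters grow or shrink as $c^{L}$ with constants $c>1$ chosen so that the motion is in the direction demanded by Lemma~\ref{lem:super}. The polynomial target regime $1/T,\,1/g,\,\sigma,\,M=O(\mathrm{Poly}(N,b,r,h,\beta))$ is therefore reached in $L=O(\mathrm{Polylog}(N,b,r,h,\beta))$ rounds.

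For the $M$-bound, introduce the potential $f(\rho)=\Tr(H\rho)-E_0\in[0,2h]$. Once the parameters enter the regime of Lemma~\ref{lem:super}, each cycle that starts from a state with $\bra{G}\rho\ket{G}\le 1/2$ reduces $f$ by at least $Pt/2$ where $P=(g^2/b)P_{\mathcal{K},\min}$. A telescoping argument then bounds the number of such ``bad'' cycles by $4h/(Pt)$, so taking $M$ a small constant factor larger forces at least one intermediate $\rho_k$ to satisfy $\bra{G}\rho_k\ket{G}>1/2$. Next I would need a \emph{persistence-of-overlap} estimate: $\mathcal{G}_\sigma$ is diagonal in the energy eigenbasis and therefore preserves $\bra{G}\rho\ket{G}$ exactly, while the simulated open-system step cannot leak appreciable weight out of $\ket{G}$ since up-transitions are suppressed by $e^{-\beta(E_1-E_0)}\le 1/3$ and the Redfield/Lindblad error is polynomially small by the same parameter choices that made Lemma~\ref{lem:super} work. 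Combining yields $\bra{G}\rho_f\ket{G}\ge 1/4$ after $M=O(h/(Pt))=O(hb/(g^2 P_{\mathcal{K},\min}t))$ cycles, and with $t=\tau_R(T,J_{\text{S}})/g$ this is $O(\mathrm{Poly}(N,b,r,h,\beta))$.

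For $K$, once $\bra{G}\rho_f\ket{G}\ge 1/4$ the probability that $K$ independent QPE trials all miss the ground-state eigenvalue is at most $(3/4)^K$; requiring this to be $\le\kappa$ gives $K=O(\log(1/\kappa))$. The main obstacle is the persistence-of-overlap estimate: although intuitive in the Lindblad picture with a cold bath, it must be made quantitative at the Redfield level of accuracy underpinning Lemma~\ref{lem:super}, and the bound has to be uniform over all $M$ cycles so that cumulative small errors cannot drop $\bra{G}\rho_k\ket{G}$ back below $1/4$ before the batch terminates; this is where the joint scaling of $\sigma$, $g$, and $T$ built into the parameter-update rule is essential.
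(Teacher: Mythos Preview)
Your proposal follows essentially the same route as the paper. The paper's own proof is terse: it establishes the polynomial parameter ranges \eqref{eq: T range}--\eqref{eq: sigma range} that make $\epsilon\le Pt/2$ in Lemma~\ref{lem:super}, shows in Appendix~\ref{app: iteration method} that the geometric update rule reaches those ranges in $O(\log(N,b,r,h))$ rounds, and then declares Theorem~\ref{the:complexity} ``almost self-evident.'' Your decomposition into the $L$-bound (geometric updates), the $M$-bound (potential $f(\rho)=\Tr(H\rho)-E_0$ plus telescoping), and the $K$-bound (independent trials) is exactly the intended unpacking of that phrase; the telescoping argument you give for $M=O(h/(Pt))$ is the natural one and matches the per-cycle guarantee of Lemma~\ref{lem:super}.

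Where you go beyond the paper is in explicitly flagging the \emph{persistence-of-overlap} step---that once some intermediate $\rho_k$ satisfies $\bra{G}\rho_k\ket{G}>1/2$, subsequent cycles cannot push it back below the QPE threshold. The paper asserts $\bra{G}\rho_{\text{f}}\ket{G}\ge 1/2$ in the final batch without isolating this issue; your observation that $\mathcal{G}_\sigma$ preserves the diagonal exactly and that up-transitions from $\ket{G}$ are thermally suppressed (via the gap bound $E_1-E_0\ge\ln 3/\beta$ you extract from~\eqref{eq:beta}, combined with the last error term in~\eqref{eq: expression of epsilon}) is the right mechanism, and it is indeed controlled by the same parameter constraints already imposed. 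So your identification of this as ``the main obstacle'' is apt: it is the one place the paper's argument is implicit rather than written out, and your sketch of how to close it is sound.
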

	
	The theorem is proved by demonstrating that in the last round of QPE batch, the probability of the ground state $\bra{G} \rho_{\text{f}} \ket{G}$ is not smaller than $1/2$. In the theorem, the choice of ground state probability $3/4$ is arbitrary. The only requirement is that it is larger than $1/2$. The energy dissipation usually slows down when the state approaches equilibrium. Therefore, a difference between the two probabilities is necessary in order to maintain the finite dissipation power. \\

	{\bf How to choose the super-bath spectral density function.} We now explain how to select an appropriate super-bath spectral density function for a physical system to be solved. Physical spectral density function is typically constructed as the product of a function describing its low-frequency characteristics (e.g., Ohmic, super-Ohmic, sub-Ohmic spectral density) and a high-frequency cutoff function (e.g., Lorentz-Drude cutoff, exponential cutoff, Gaussian cutoff, etc.). While the low-frequency characteristics of the spectral density function primarily determines the dynamics of an open quantum system \cite{breuer2002theory}, the specific form of the super-bath spectral density function must be tailored to the physical model under consideration. For example, in nuclear physics, the selection of an appropriate super-bath spectral density function can be guided by gamma decay transition probabilities. For gamma emission of frequency of $\omega$, the transition probability per second is
	\begin{eqnarray}
		T(l;R) = \frac{8\pi(l+1)}{l[(2l+1)!!]^2}\frac{\omega^{2l+1}}{c^{2l+1}\hbar}B(Rl),
	\end{eqnarray}
	where $c$ represents the speed of light; $l$ represents the multipole order, defining the type of transition and its angular momentum characteristics (e.g., $l = 1$ for dipole, $l = 2$ for quadrupole); $R$ stands for $E$ or $M$, indicating whether the transition is electric ($E$) or magnetic ($M$); and $B(Rl)$ represents the reduced transition probability, depending on nuclear structure and initial and final states involved in the gamma decay \cite{krane1991introductory, greiner1996nuclear}. From this formula, we see that the spectral density can be designed as a linear combination of such terms, which encompasses all possible real physical processes.
	
	Our algorithm has the potential to bypass the constraints of real physical processes. In nuclear physics, certain metastable states exhibit long lifetimes because selection rules based on angular momentum and parity conservation forbid low-order multipole transitions, leaving only small-probability high-order multipole transitions. Since our algorithm simulates the dynamics of an open quantum system on a quantum computer, it is not restricted by such selection rules. This flexibility may allow us to design a super-bath spectral density function with a low-frequency component proportional to $\omega^3$, potentially enabling more efficient energy dissipation and faster convergence to the ground state. 
	
	 Furthermore, since we have established the partial order relation among baths, every super-bath we consider represents a group of baths.
	\section{Implementation issues and scalability}
	\label{sec:implementation and scalability}
	
	There are three quantum subroutines utilized in the super-bath algorithm: the Gaussian stabilisation $\mathcal{G}_\sigma$, the open system simulation $\mathcal{M}(H,T,g^2J_{\text{S}};t)$ and QPE. The Gaussian stabilisation is a Hamiltonian simulation of the system with a randomised evolution time. The open system simulation is a Hamiltonian simulation of the entire system, including the system and the bath. Therefore, the super-bath algorithm is essentially realised through two standard quantum subroutines, Hamiltonian simulation \cite{childs2012hamiltonian,berry2015hamiltonian,low2019hamiltonian,mc2023classically} and QPE  \cite{kitaev1995quantum, dorner2009optimal, nielsen2010quantum, Wan2021A, clinton2024quantum}. Notably, QPE itself also relies on efficient Hamiltonian simulation. Recently, reference \cite{clinton2024quantum} presents a control-free quantum phase estimation method that significantly reduces quantum circuit depth. 
	
	There are various algorithms of the Hamiltonian simulation. Although simulating unitary dynamics generated by arbitrary Hamiltonians remains challenging even on quantum computers, for a substantial class of physical systems, the time complexity of Hamiltonian simulation on a quantum computer scales polynomially with respect to the system size and permissible error \cite{nielsen2010quantum}, such as for Hamiltonians of molecules, nuclei, and many condensed matter models \cite{loaiza2023reducing,holland2020optimal,clinton2024quantum}.  In the open system simulation, we also need to model the bosonic bath. There have been recent advances in both theoretical and experimental aspects of this field. Ref. \cite{PhysRevLett.134.070604} proposes a gate-based quantum simulation framework for Gaussian bosonic circuits that operates on exponentially many modes without requiring direct simulation of infinite-dimensional Hilbert spaces, effectively reducing complex bosonic evolutions to manageable qubit-based quantum computations. Reference \cite{PhysRevA.109.062402} and \cite{sun2024quantumsimulationspinbosonmodels}  demonstrate the simulation of the spin-boson model in trapped-ion systems, enabling precise control over bath temperature and spectral densities for studying dissipative quantum dynamics. 
	
	\begin{corollary}
	Let $n$, $\kappa$ and $\epsilon$ be the size of the system, the permissible failure probability and the permissible error. The time (qubit) complexity for observing the ground state in the super-bath algorithm is $O(\mathrm{Poly}(n,1/\kappa,1/\epsilon))$ under the following conditions: 
	\begin{itemize}
		\item[1.] The spectrum range of the Hamiltonian is $h = O(\mathrm{Poly}(n))$; 
		\item[2.] The temperature in Eq. (\ref{eq:beta}) is $1/\beta = O(\mathrm{Poly}(n))$; 
		\item[3.] There exists a good sub-bath $J_{\textup{g-sub}}$ with a constraint factor $b$ satisfying $1/P_{\mathcal{K},\textup{min}}(1/\beta,J_{\textup{g-sub}})$ and $b = O(\mathrm{Poly}(n))$; 
		\item[4.] The operator $e^{-iHt}$ can be implemented with an additive error $\epsilon_1$ at the time (qubit) cost $\mathrm{Poly}(n,t,1/\epsilon_1)$; 
		\item[5.] The superoperator $\mathcal{M}(H,T,g^2J_{\text{S}};t)$ can be implemented with an additive error $\epsilon_2$ at the time (qubit) cost $\mathrm{Poly}(n,1/T,1/g,t,1/\epsilon_2)$. 
	\end{itemize}
	\label{the:condition}
\end{corollary}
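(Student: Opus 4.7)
The plan is to combine Theorem~\ref{the:complexity} with the polynomial-cost assumptions on the quantum subroutines. Theorem~\ref{the:complexity} already expresses the number of QPE batches $L$, the batch size $K$, and the last-round parameters $M$, $t$, $\sigma$ as polynomial or polylogarithmic functions of $N$, $b$, $r$, $h$, $\beta$, and $1/\kappa$. Conditions (1)--(3) of the corollary precisely enforce that $h$, $\beta$, $b$, and $r=1/P_{\mathcal{K},\textup{min}}(1/\beta,J_{\textup{g-sub}})$ are all $O(\mathrm{Poly}(n))$. For the natural choices of interaction set in Definitions~\ref{def:BOQS} and~\ref{def:BOFS} we also have $N=|\mathbb{A}|=O(\mathrm{Poly}(n))$. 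Substituting these bounds into Theorem~\ref{the:complexity} gives $L=O(\mathrm{Polylog}(n))$, $K=O(\log(1/\kappa))$, and $M,t,\sigma=O(\mathrm{Poly}(n))$; the parameter-update rule additionally fixes $1/T$ and $1/g$ to scale polynomially in $n$.

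Next I would cost a single cooling cycle. The Gaussian stabilization $\mathcal{G}_\sigma$ is implemented by sampling $t'$ from $g_\sigma$ and applying $e^{-iHt'}$; since the probability weight outside $|t'|\leq 10\sigma$ is negligible, the effective evolution time is $O(\sigma)$, which by condition (4) costs $\mathrm{Poly}(n,\sigma,1/\epsilon_1)$. The open-system map $\mathcal{M}(H,T,g^2 J_{\textup{S}};t)$ costs $\mathrm{Poly}(n,1/T,1/g,t,1/\epsilon_2)$ by condition (5). A final QPE measurement on $H$, which reduces to controlled Hamiltonian simulation, costs $\mathrm{Poly}(n,1/\epsilon_{\text{QPE}})$ by condition (4). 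Each of these factors is polynomial in $n$ once the first step is invoked.

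The third step is error propagation. After $M$ cooling cycles the trace distance between the ideal and simulated final states is bounded by $M(\epsilon_1+\epsilon_2)$ by monotonicity of the trace norm under completely positive maps, so $\bra{G}\rho_{\textup{f}}\ket{G}$ is perturbed by at most $M(\epsilon_1+\epsilon_2)$. To preserve the inequality $\bra{G}\rho_{\textup{f}}\ket{G}\geq 1/2$ underlying Theorem~\ref{the:complexity} and to ensure the final energy estimate lies within $\epsilon$ of $E_{\textup{gs}}$, I set $\epsilon_1,\epsilon_2=\Theta(1/M)$ and $\epsilon_{\text{QPE}}=\Theta(\epsilon)$, both of which are at most $1/\mathrm{Poly}(n,1/\epsilon)$. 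Multiplying everything through, the total time complexity is $L\cdot K\cdot\bigl[M\,(\text{cost of }\mathcal{G}_\sigma+\mathcal{M})+\text{cost of QPE}\bigr]=O(\mathrm{Poly}(n,1/\kappa,1/\epsilon))$, and the qubit count, dominated by the simulation workspace and QPE ancillae, is polynomial by the same conditions.

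The main obstacle I anticipate is not complexity bookkeeping but the error-management step: one must confirm that a per-cycle simulation error of order $1/M$ does not spoil the ``approximate stationarity'' mechanism that allows Lemma~\ref{lem:super} to bypass the rotating-wave approximation. Concretely, the accumulated simulation error must fit inside the slack $\epsilon\leq Pt/2$ in Lemma~\ref{lem:super}; since $P=(g^2/b)P_{\mathcal{K},\textup{min}}$ and $t$ are each controlled polynomially by the parameter choices of the first step, while $M$ can be inflated polynomially without changing the asymptotic scaling, this slack can always be arranged. Once this compatibility is checked, the corollary follows by assembling the polynomial factors identified above.
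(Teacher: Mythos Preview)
Your proposal is correct and follows the same route as the paper: invoke Theorem~\ref{the:complexity}, feed in Conditions (1)--(3) so that $N,b,r,h,\beta$ are all $O(\mathrm{Poly}(n))$, then use Conditions (4)--(5) to cost the Hamiltonian-simulation, open-system, and QPE subroutines. The paper in fact does not give a formal proof of this corollary; the paragraph following it is only a brief discussion of what each condition buys (Condition~1 bounds the initial energy, Condition~2 guarantees the required temperature range and the energy gap for QPE), so your explicit error-propagation bookkeeping and the check that per-cycle simulation errors of order $1/M$ fit inside the $Pt/2$ slack of Lemma~\ref{lem:super} are more thorough than what the paper supplies.
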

	
	The time for attaining the ground state depends on the energy dissipation power and the initial energy of the system. Condition 1 introduces a bound on the initial energy of the system. Under condition 2, we can reach a sufficiently low temperature in the parameter updating loop. Condition 2 also implies an energy gap $O(\mathrm{Poly}(n))$ between the ground state and the first excited state. This gap is required in QPE. With the gap, QPE cannot distinguish the ground state from the first excited state. The gap guarantees that QPE can successfully project the final state onto the ground state.

	\subsection*{Comparison with previous work}
	
	Numerous studies have applied open system theory in designing quantum algorithms for solving ground states, typically by preparing a thermal state at low temperatures. Reference \cite{temme2011quantum} introduces a quantum generalization of the classical Metropolis sampling algorithm, which uses random unitary transformations on the system state. The acceptance or rejection of each transformation follows the Metropolis rule, and the concept of ``quantum detailed balance'' ensures that the thermal state is the unique fixed point. Other improved versions of this method have also been proposed 
\cite{yung2012quantum, moussa2019low, Lemieux2020, Layden2023, Jiang2024}. 
	
	A more physically motivated approach involves simulating the dynamics of a system coupled to a low-temperature heat bath. The ground state (or more generally, a low-temperature thermal state) should be readily achievable by simply coupling the system to a low-temperature bath. Building on this intuition, several studies have proposed efficient methods for preparing a system in a low-temperature thermal state by simulating its evolution under interactions with an appropriately designed environment \cite{Terhal2000, Shabani2016, Kaplan2017, Metcalf2020, Polla2021, Rost2021, movassagh2023preparing, Chen2021fast, Metcalf2022, Cattaneo2023, Mi2024}.
	
	Since the evolution of open quantum systems can be approximately described by the Lindblad equation under certain conditions \cite{davies1974markovian, lindblad1976generators}, simulating the Lindblad dynamics provides another approach to prepare the ground state or low-energy thermal state \cite{Kliesch2011, Childs2017, Cleve2016, Li2022, pocrnic2023quantum, Ding2023, Ding2024a, Chen2024,yu2024exponentially}. Note that the simulated Lindblad dynamics may not reproduce any real system-bath dynamics found in nature \cite{Ding2023}. Quantum Gibbs samplers based on Lindblad dynamics have gained significant attention, with notable advancements in the field \cite{wocjan2023szegedy, rall2023thermal, chen2023quantum, chen2023efficient, Ding2024, Gilyen2024}. In these algorithms, mixing time often appears as a parameter in the time complexity. Recently, reference \cite{rouzé2024optimalquantumalgorithmgibbs} provides a lower bound on the temperature that guarantees a rapid mixing time for the Lindbladian introduced in \cite{chen2023quantum}. However, the mixing time at low temperatures remains uncertain, which could result in exponential growth in time complexity when solving for the ground state.
	
	In contrast to previous methods, our approach simulates the dynamics of the system coupled to the super-bath, and additionally incorporates a dephasing process. The dephasing process relaxes the requirement of precisely designing an environment that drives the system toward the ground state, allowing us to employ a super-bath encompassing various dissipation processes. Consequently, we establish a sufficient condition for solving the ground state problem of a physical system on a quantum computer within polynomial time: the existence of a bath bounded by a proper constraint function that allows the system to dissipate into its ground state, even without detailed knowledge of the bath’s specific characteristics.

	\section{Conclusions}
	\label{sec:conclusions}
	
	In this work, we propose the Super-Bath Quantum Eigensolver: a novel algorithm for ground-state preparation of quantum systems.  Under certain conditions, our algorithmic framework ensure a partial order relation among environments. Therefore, the efficiency of this algorithm relies on a single assumption: the existence of a physical sub-bath able to induce a reduced system dynamics converging to the ground state in a time which is polynomial in the system size. A detailed characterization of the bath is not needed. Despite its generality, this assumption allows to define a super-bath which can be practically simulated. More precisely, the proposed Quantum Eigensolver consists of a cooling cycle made by a Gaussian stabilization channel followed by an open dynamics induced by the interaction with the super-bath. We explicitly determine the scaling of the simulation performance in terms of properties of the super-bath spectral density function to show that the original existence assumption results in an efficient cooling algorithm. These findings can find application to a variety of different physical systems. In particular, experimental data for the lifetime of nuclear isomers due to gamma decay suggests that the algorithm could be used to find nuclear ground states. Future work may focus on identifying suitable super-baths for other physical systems, such as atomic ensembles.

	\section*{Acknowledgements}	
	This work is supported by the National Natural Science Foundation of China (Grant Nos. 12225507, 12088101) and NSAF (Grant No. U1930403). We would like to thank Hanxu Zhang for the helpful discussions. 
	
	\appendix
	
	\begin{widetext}
		
		\section{Open quantum systems}
		\label{app:open_system}
		
		In this section, we provide explicit definitions of notations given in Sec. \ref{sec:notations}. We also give the error bound between the density matrix evolved according to the Redfield equation and the exact density matrix of the system in the case of a Gaussian bath. 
		
		For an open system, the system interacts with a bath. The Hamiltonian of the combined system consisting of the system and the bath reads 
		\begin{eqnarray}
			H_{SB} = H\otimes \openone_B + \openone\otimes H_B + H_I, 
		\end{eqnarray}
		where $H$ and $H_B$ ($\openone$ and $\openone_B$) are the Hamiltonians (identity operators) of the system and bath, respectively, and $H_I$ is the interaction between them. Without loss of generality, the interaction can be expressed in the form $H_I = \sum_{\alpha = 1}^N A_\alpha\otimes B_\alpha$, where $A_\alpha$ and $B_\alpha$ are Hermitian system and bath operators, respectively. We choose $A_\alpha$ to be dimensionless with $\norm{A_\alpha}_\infty = 1$ for all $\alpha$, where $\norm{\bullet}_\infty$ denotes the spectral norm.

		Suppose the system and bath are uncorrelated at the time $t = 0$. The state of the system at the time $t$ is 
		\begin{eqnarray}
			\rho(t) = \Tr_{B}\left[e^{-iH_{SB}t}\left(\rho(0)\otimes\rho_B\right)e^{iH_{SB}t}\right], 
		\end{eqnarray}
		where $\Tr_B$ is the partial trace over the bath, $\rho(0)$ is the initial state of the system, and $\rho_B$ is the initial state of the bath. We focus on the case that $\rho_B$ is a thermal state. In an expansion of the time evolution operator $e^{-iH_{SB}t}$ in the interaction picture, one can find that the time evolution of the open system is determined by bath correlation functions \cite{breuer2002theory}. The two-time bath correlation function is 
		\begin{eqnarray}
			C_{\alpha,\beta}(s) = \Tr_B\left[B_\alpha(t)B_\beta(t-s)\rho_B\right],
			\label{eq:correlation}
		\end{eqnarray}
		where $B_\alpha(t) = e^{iH_Bt}B_\alpha e^{-iH_Bt}$ are bath operators in the interaction picture. One can verify that the correlation is independent of $t$ because the bath is in a thermal state. 
		
		\subsection{Gaussian boson bath}
		
		In the case of a Gaussian boson bath, the bath Hamiltonian is in the form 
		\begin{eqnarray}
			H_B = \sum_l\int_0^\infty d\omega \omega b_l^{\dag}(\omega)b_l(\omega),
		\end{eqnarray}
		where $[b_l(\omega),b_{l'}^\dag(\omega')] = \delta_{l,l'}\delta(\omega-\omega')$. Here, $\delta_{l,l'}$ is the Kronecker delta, and $\delta(\omega-\omega')$ is the Dirac delta function. The bath operators are in the form 
		\begin{eqnarray}
			B_\alpha = \sum_l\int_0^\infty d\omega \big(u_{\alpha,l}(\omega)b_l(\omega) + u_{\alpha,l}^*(\omega)b_l^{\dag}(\omega)\big),
		\end{eqnarray}
		where $u_l(\omega)\in \mathbb{C}$ are coupling coefficients. When the bath is in the thermal state at the temperature $T$, i.e. $\rho_B = e^{-H_B/T}/\Tr(e^{-H_B/T})$, the bath correlation functions are 		
		\begin{eqnarray}
			C_{\alpha,\beta}(T,J;s) \equiv  \int_0^\infty  J_{\alpha,\beta}(\omega) \frac{e^{i\omega s}}{e^{\beta\omega}-1} + J_{\alpha,\beta}^*(\omega) e^{-i\omega s} \left(\frac{1}{e^{\beta\omega}-1}+1\right) d\omega,
			\label{eq:correlationGB}
		\end{eqnarray}		
		where 
		\begin{eqnarray}
			J_{\alpha,\beta}(\omega) \equiv \sum_l u^*_{\alpha,l}(\omega)u_{\beta,l}(\omega)
		\end{eqnarray}
		is the spectral density function. For all $\omega$, the matrix $J(\omega)$ is positive semi-definite. We remark that Eq. (\ref{eq:correlationGB}) is derived by applying Eq. (\ref{eq:correlation}) to the case that the bath is Gaussian and $\rho_B$ is a thermal state.

		For Gaussian boson baths, we can derive all multi-time bath correlation functions from the two-time correlation function, i.e. all Gaussian boson baths with the same two-time correlation function lead to the same time evolution of the open system  \cite{fogedby2022field}. Therefore, given a tuple of system operators $(A_1,A_2,\ldots,A_N)$, the time evolution of the open quantum system only depends on the bath temperature $T$ and spectral density $J$ in addition to the system Hamiltonian. The time-dependent trace-preserving completely positive map describing the time evolution of the open system in the Schrödinger picture reads 
		\begin{eqnarray}
			\mathcal{M}(H,T,J;t) \equiv \Tr_{B}\left[e^{-iH_{SB}t}\left(\bullet\otimes\rho_B\right)e^{iH_{SB}t}\right],~~~
		\end{eqnarray}
		where $H_B$ and $B_\alpha$ are taken such that the spectral density is $J$.

		\subsection{Error in the Redfield equation}
		\label{app: error}
		
		The generator in the Redfield equation reads 
		\begin{eqnarray}
			\mathcal{K}(T,J;t)\bullet &\equiv& \sum_{\alpha,\beta}\int_0^\infty ds C_{\alpha\beta}(T,J;s) \Big(A_\beta(t-s)\bullet A_\alpha(t) - A_\alpha(t)A_\beta(t-s)\bullet\Big) + H.c.,
		\end{eqnarray}
		where $A_\alpha(t) = e^{iHt}A_\alpha e^{-iHt}$ are system operators in the interaction picture. This equation is approximate, and the approximation is characterised by two quantities 
		\begin{eqnarray}
			\label{eq: tauR}
			\tau_R(T,J) \equiv \Big(4\int_0^\infty ds \norm{C(T,J;s)}_{L_1}\Big)^{-1}
		\end{eqnarray}
		and 
		\begin{eqnarray}
			\label{eq: tauB}
			\tau_B(T,J) \equiv 4\tau_R(T,J)\int_0^\infty ds s\norm{C(T,J;s)}_{L_1},
		\end{eqnarray}
		corresponding to the timescales of system relaxation and bath correlation, respectively. Here, $\norm{\bullet}_{L_1}$ denotes the entry-wise matrix $L_1$ norm, i.e. $\norm{C}_{L_1} = \sum_{\alpha\beta}\abs{C_{\alpha\beta}}$. 
		
		\begin{lemma}
			\label{lem: Redfield error}
			There is a rigorous upper bound on the error in the Redfield equation $\frac{d}{dt}\rho_R(t) = -i[H,\rho_R(t)] + \mathcal{K}(T,J;0) \rho_R(t)$. For all $\rho(0)\in \mathbb{S}$ and $0 \leq t \leq \tau_R\ln\left(1 + \frac{\tau_R}{2\tau_B}\right)$, the inequality 
			\begin{eqnarray}
				\norm{\rho_R(t) - \rho(t)}_1 \leq 2\left(e^{t/\tau_R}-1\right)\frac{\tau_B}{\tau_R} + \varepsilon_l(t)
			\end{eqnarray}
			holds, where
			\begin{eqnarray}
				\varepsilon_l(t) = \left\{\begin{array}{ll}
					2 \left(e^{t/\tau_R}-1\right), & t \leq \tau_B, \\
					2 e^{t/\tau_R}\left[\left(1-e^{-\tau_B/\tau_R}\right)+\frac{\tau_B}{\tau_R}\ln{\frac{t}{\tau_B}} \right], & t > \tau_B.
				\end{array}\right.
			\end{eqnarray}
			Here, $\rho(t)=\mathcal{M}(H,T,J;t)\rho(0)$ is the exact state of the open system, $\rho_R(t)$ is the solution to the Redfield equation taking the initial state $\rho_R(0) = \rho(0)$, and $\norm{\bullet}_1$ denotes the trace norm. 
		\end{lemma}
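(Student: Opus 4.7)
My plan is to adapt the rigorous Born--Markov error analysis developed for Gaussian bosonic baths to the present Redfield setting. The strategy is to compare the exact reduced dynamics $\rho(t)=\mathcal{M}(H,T,J;t)\rho(0)$ with the Redfield solution $\rho_R(t)$ via a Dyson-series expansion in the coupling, organised so that the error splits into a Born-level contribution (crossing diagrams absent from the Redfield chain structure) and a Markov-tail contribution (arising from replacing the memory integral $\int_0^t ds$ by $\int_0^\infty ds$).

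First I would move to the interaction picture generated by $H+H_B$, expand $\tilde\rho(t)=\Tr_B[\tilde U(t)(\rho(0)\otimes\rho_B)\tilde U^\dagger(t)]$ as a time-ordered Dyson series in $\tilde H_I(s)=\sum_\alpha A_\alpha(s)\otimes B_\alpha(s)$, and exploit the Gaussianity of the thermal bath state together with Wick's theorem to reduce every $2n$-point bath correlator to a sum over perfect pairings of two-point correlators $C_{\alpha,\beta}$. The Redfield evolution corresponds exactly to keeping only the non-crossing (chain) pairings and sending the inner-time integration limit to infinity, so that
\begin{equation}
\tilde\rho(t)-\tilde\rho_R(t)=\Delta_{\textup{cross}}(t)+\Delta_{\textup{Markov}}(t).
\end{equation}

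Next I would bound each piece in trace norm using $\norm{A_\alpha}_\infty=1$ together with the moment identities $\int_0^\infty \norm{C(s)}_{L_1}\,ds=1/(4\tau_R)$ and $\int_0^\infty s\,\norm{C(s)}_{L_1}\,ds=\tau_B/(4\tau_R)$ read off from Eqs.~\eqref{eq: tauR} and \eqref{eq: tauB}. Each crossing diagram at order $2n$ picks up a single overlap integration controlled by the first moment, contributing a factor $\tau_B/\tau_R$, while the remaining $n-1$ chain integrations produce powers of $t/\tau_R$; summing the resulting geometric series in $n$ yields the principal bound $2(e^{t/\tau_R}-1)\tau_B/\tau_R$, with the stated time window $t\leq \tau_R\ln(1+\tau_R/(2\tau_B))$ emerging as precisely the range over which the geometric ratio stays below one. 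The Markov-tail contribution is controlled by $\int_t^\infty\norm{C(s)}_{L_1}\,ds\leq \tau_B/(4\tau_R t)$ (Markov's inequality on the first moment), and after being folded into the chain sum it produces $\varepsilon_l(t)$.

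The main obstacle will be recovering the piecewise form of $\varepsilon_l$. For $t\leq \tau_B$ the memory kernel has not effectively decayed and only the crude polynomial estimate $2(e^{t/\tau_R}-1)$ is available. For $t>\tau_B$ I would split the time integration at $\tau_B$, using an exponential envelope on $[0,\tau_B]$ to obtain the piece $(1-e^{-\tau_B/\tau_R})$ and the refined tail estimate $\int_{\tau_B}^{t}\norm{C(s)}_{L_1}\,ds \leq (\tau_B/4\tau_R)\ln(t/\tau_B)$ (derived by integrating Markov's inequality $\norm{C(s)}_{L_1}\leq \tau_B/(4\tau_R s^2)$) to obtain the $\ln(t/\tau_B)$ factor. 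A secondary technical hurdle will be the combinatorial bookkeeping needed to separate crossing from non-crossing pairings without double-counting, while keeping the geometric sum sharp enough to reproduce the prefactor of $2$.
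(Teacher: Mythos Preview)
Your diagrammatic strategy (Dyson expansion plus Wick's theorem, separating crossing from chain pairings) is a legitimate alternative to the paper's route, but the paper does something quite different: it works at the level of generators, following Nathan--Rudner to bound the instantaneous Born and Markov errors by $\|\xi_B(t)\|_1,\|\xi_M(t)\|_1\leq\tau_B/\tau_R^2$, and then invokes a Gr\"onwall-type lemma of Mozgunov--Lidar with Lipschitz constant $\Lambda=1/\tau_R$ to obtain $\|\rho_{BM}(t)-\rho(t)\|_1\leq 2(e^{t/\tau_R}-1)\tau_B/\tau_R$. The integration-limit error $\xi_l$ is bounded by $\|\rho_{BM}(t)\|_1\min(1,\tau_B/t)/\tau_R$, and it is integrating \emph{this} factor inside the Gr\"onwall estimate that produces the $(\tau_B/\tau_R)\ln(t/\tau_B)$ piece of $\varepsilon_l$. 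The time window $t\leq\tau_R\ln(1+\tau_R/2\tau_B)$ is not a geometric-series convergence criterion; it is used only to guarantee $\|\rho_{BM}(t)\|_1\leq 2$, which is the source of the prefactor $2$ in $\varepsilon_l$.

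Your plan has a genuine gap in the Markov-tail step. The pointwise bound $\|C(s)\|_{L_1}\leq\tau_B/(4\tau_R s^2)$ does \emph{not} follow from the two moment identities---$C$ may oscillate and is not assumed monotone---so your claimed inequality $\int_{\tau_B}^{t}\|C(s)\|_{L_1}\,ds\leq(\tau_B/4\tau_R)\ln(t/\tau_B)$ is unjustified (from the first moment alone one only gets the constant bound $\leq 1/(4\tau_R)$, which produces no logarithm). In the paper the logarithm does not come from integrating $\|C(s)\|$ over $[\tau_B,t]$ at all; it comes from integrating the \emph{tail} estimate $\int_{s}^{\infty}\|C\|\,du\leq\tau_B/(4\tau_R s)$ against the Gr\"onwall propagator in the variable $s$. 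To make your diagrammatic route recover the stated $\varepsilon_l(t)$ you would have to reorganise the chain sum so that the outermost time integration carries the factor $\min(1,\tau_B/s)$, rather than attempting any pointwise control of $C$.
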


		\subsubsection{Derivation of Redfield equation and error terms}
		\label{subsection: derivation of Redfield equation}
		We briefly outline the derivation of the Redfield equation to highlight the error terms introduced by approximations such as the Born-Markov approximation.

		The Redfield equation is most easily derived in the interaction picture. We transform to the interaction picture by applying a unitary transformation generated by the Hamiltonian $H + H_B$. After this transformation, the Hamiltonian of the combined system in the interaction picture is given by
		\begin{eqnarray}
			{H}_I(t) = \sum_{\alpha=1}^N {A}_\alpha(t) \otimes {B}_\alpha(t),
		\end{eqnarray}
		where ${A}_\alpha(t) = e^{-iHt}A_\alpha e^{iHt}$ and ${B}_\alpha = e^{-iH_Bt}B_\alpha e^{iH_Bt}$ are system and bath operators in the interaction picture respectively. The von Neumann equation of the combined system can be written as 
		\begin{eqnarray}
			\label{eq: von Neumann equation}
			\frac{d}{dt}\rho_{SB}(t) = -i[H_I(t),\rho_{SB}(t)].
		\end{eqnarray}
		Integrating once, substituting the result back into the original equation, and taking the partial trace over the bath degrees, we obtain a differential equation of the system density matrix,
		\begin{eqnarray}
			\frac{d}{dt} \rho(t) = -\int_0^t ds \Tr_B[H_I(t),[H_I(s),\rho_{SB}(s)]].
		\end{eqnarray}
		Note that the above equation is exact.
		
		It is common in nature that systems are weakly coupled to a large bath: the influence of the system on the bath is small, and the timescale over which the state of the system varies appreciably is large compared to the timescale over which the bath correlation functions decay \cite{breuer2002theory}. The Born-Markov approximation can be used to derive a master equation of such open systems. The Born approximation amounts to using $\rho(t)\otimes \rho_B$ instead of $\rho_{SB}(t)$ inside the integral since the interaction is weak and the bath is large compared to the system. We obtain
		\begin{eqnarray}
			\frac{d}{dt}\rho(t) = \sum_{\alpha,\beta}\int_0^t dsC_{\alpha,\beta}(t-s)\big(A_\beta(s)\rho(s)A_\alpha(t) - A_\alpha(t)A_\beta(s)\rho(s)\big) + H.c. + \xi_B(t),
		\end{eqnarray}
		where the correlation function $C_{\alpha,\beta}(t-s)$ is defined in Eq. (\ref{eq:correlation}), and we use $\xi_B(t)$ to denote the error induced by the Born approximation.
		
		We already define the characteristic system relaxation and bath correlation time in Eq. (\ref{eq: tauR}) and Eq. (\ref{eq: tauB}). Here, we point out that when $\tau_B \ll \tau_R$, the Markov approximation can be implemented by replacing $\rho(s)$ with $\rho(t)$. We obtain
		\begin{eqnarray}
			\frac{d}{dt}\rho(t) = \sum_{\alpha,\beta}\int_0^t dsC_{\alpha,\beta}(t-s)\big(A_\beta(s)\rho(t)A_\alpha(t) - A_\alpha(t)A_\beta(s)\rho(t)\big) + H.c. + \xi_B(t) + \xi_M(t),
		\end{eqnarray}
		where we use $\xi_M(t)$ to denote the error induced by the Markov approximation. We use $\rho_{BM}(t)$ to denote the state of the system derived by the Born-Markov approximation master equation, that is, 
		\begin{eqnarray}
			\label{eq: Born-Markov approximation master equation}
			\frac{d}{dt}\rho_{BM}(t) = \sum_{\alpha,\beta}\int_0^t dsC_{\alpha,\beta}(t-s)\big(A_\beta(s)\rho(t)A_\alpha(t) - A_\alpha(t)A_\beta(s)\rho(t)\big) + H.c..
		\end{eqnarray}
		Note that $\rho_{BM}(t)$ may not be a strict density matrix since the Born-Markov approximation master equation may not correspond to a completely positive and trace-preserving dynamical map.
		
		For Gaussian boson baths, there is a rigorous upper bound on the error in the Born-Markov approximation which will be discussed later.
		
		Finally, we substitute $s$ by $t-s$ int the integral of Eq. (\ref{eq: Born-Markov approximation master equation}) and let the upper limit to the integral go to infinity to obtain a Markovian quantum master equation, i.e., the Redfield equation. Use $\xi_l(t)$ to denote the error caused by changing the upper limit of the integral. We obtain
		\begin{eqnarray}
			\frac{d}{dt}\rho_{BM}(t) = \mathcal{K}(t)\rho_{BM}(t) + \xi_l(t),
		\end{eqnarray}
		where 
		\begin{eqnarray}
			\xi_l(t) &=& -\sum_{\alpha,\beta}\int_t^\infty ds C_{\alpha\beta}(s) \big(A_\beta(t-s)\rho_{BM}(t) A_\alpha(t) - A_\alpha(t)A_\beta(t-s)\rho_{BM}(t)\big) + H.c.
		\end{eqnarray}
		We also analyze this error term later in this section.

		\subsubsection{Born-Markov approximation error}
		
		We first bound the error induced by the Born-Markov approximation. We follow the same approach as in Ref. \cite{nathan2020universal} and obtain similar results. The only difference is that in Ref. \cite{nathan2020universal} the spectral norm is used, whereas we use the trace norm. We demonstrate that using different norms does not affect the final results. To achieve this, two key inequalities are employed. First, since the partial trace $\Tr_B$ is a completely positive and trace-preserving map, the induced trace norm of $\Tr_B$ is less than 1. That is, for any operator $O$ of the Hilbert space of combined system $SB$, we have
		\begin{eqnarray}
			\label{eq: partial trace norm}
			\norm{\Tr_B(O)}_1 \leq \norm{O}_1.
		\end{eqnarray}
		Second, for any operator $O_1,O_2,O_3$ that can be multiplied together, we have
		\begin{eqnarray}
			\label{eq: norm submultiplicative}
			\norm{O_1 O_2 O_3}_1 \leq \norm{O_1}_\infty \norm{O_2}_1 \norm{O_3}_\infty 
		\end{eqnarray}
		
		In order to better visualize the consistency of the derivation process, we adopt the same notation for superoperators as in Appendix A in Ref. \cite{nathan2020universal}, that is, by adding a hat symbol $\hat{\bullet}$ on the superoperators. However, to generalize the spectral norm to the trace norm, we still use the operator form of the density matrix (i.e. $\rho$) rather than the vectorized version (i.e. $|\rho\rangle\rangle$). Then, the von Neumann equation of the combined system in Eq. (\ref{eq: von Neumann equation}) can be rewritten as 
		\begin{eqnarray}
			\label{eq: von Neumann equation in universal}
			\frac{d}{dt}\rho_{SB}(t) = -i\hat{\mathcal{H}}(t)\rho_{SB}(t),
		\end{eqnarray}
		where $\hat{\mathcal{H}}(t) = [H(t),\bullet] = \hat{H}^l(t) - \hat{H}^r(t)$. Here, we define the left- and right-multiplication superoeprator $\hat{A}^l$ and $\hat{A}^r$ as
		\begin{eqnarray}
			\hat{A}^l \rho = A \rho, \qquad  \hat{A}^r \rho =  \rho A.
		\end{eqnarray}
		
		Suppose the initial state is a direct product state,  i.e., $\rho_{SB}(0) = \rho(0) \otimes \rho_B$, the state of the total system at time $t$ is given by
		\begin{eqnarray}
			\rho_{SB}(t) = \hat{\mathcal{U}}(t,0)\big(\rho(0)\otimes \rho_B\big).
		\end{eqnarray}
		Here, $\hat{\mathcal{U}}(t,0)$ denotes the unitary evolution superoperator of the combined system, given by $\hat{\mathcal{U}}(t,0)(\bullet) = \mathcal{T}e^{-i\int_0^t ds \hat{\mathcal{H}}(s)}\bullet$, where $\mathcal{T}$ denotes the time-ordering operation. The reduced density matrix of the system can be rewritten as 
		\begin{eqnarray}
			\label{eq: exact final state in universal}
			\rho(t) = \Tr_B\Big(\hat{\mathcal{U}}(t,0)\big(\rho(0)\otimes\rho_B\big)\Big).
		\end{eqnarray}
		Taking the time derivative, we obtain the time evolution equation of the system-reduced density matrix in the superoeprator form
		\begin{eqnarray}
			\label{eq: A7}
			\frac{d}{dt}\rho(t) = -i \sum_{m,\alpha}\nu_m\hat{A}_\alpha^m(t)\Tr_B\Big(\hat{B}^m_\alpha(t)\hat{\mathcal{U}}(t,0)\big(\rho(0)\otimes\rho_B\big)\Big),
		\end{eqnarray}
		where $m={l,r}$, with $\nu_l = 1$ and $\nu_r = -1$. One can verify that Eq. (\ref{eq: von Neumann equation in universal}), Eq. (\ref{eq: exact final state in universal}), and Eq. (\ref{eq: A7}) are essentially the same as Eq. (A1), Eq. (A4) and Eq. (A7) in  Ref. \cite{nathan2020universal}.
		
		Then, by simultaneously taking the trace norm on both sides of Eq. (\ref{eq: A7}), following the same derivation as in Ref. \cite{nathan2020universal}, we obtain
		\begin{eqnarray}
			\norm{\frac{d}{dt}\rho(t)}_1 \leq \sum_{m,n;\alpha,\beta}\int_0^t ds \abs{C^{mn}_{\alpha,\beta}(t-s)}k^n_\beta(t,s),
		\end{eqnarray}
		where $C^{mn}_{\alpha,\beta}(t-s) = \Tr_B(\hat{B}^m_\alpha(t)\hat{B}^n_\beta(s)\rho_B)$ and $k^n_\beta(t,s) = \norm{\Tr_B\Big(\hat{\mathcal{U}}(t,s)\hat{A}^n_\beta \hat{\mathcal{U}}(s,0)\rho(0)\otimes\rho_B \Big)}_1$. Except for replacing the spectral norm with the trace norm, this equation is identical to Eq. (A13) in Ref. \cite{nathan2020universal}. Using the inequality (\ref{eq: partial trace norm}) and (\ref{eq: norm submultiplicative}), one can verify that $k^n_\beta(t,s) \leq 1,$ which is the same result as Eq. (A18) in Ref. \cite{nathan2020universal}. We then obtain
		\begin{eqnarray}
			\norm{\frac{d}{dt}\rho(t)}_1 \leq 4 \sum_{\alpha\beta}\int_0^t ds \abs{C_{\alpha\beta}(t-s)} \leq \frac{1}{\tau_R}.
		\end{eqnarray}
		This result is the same as in Ref. \cite{nathan2020universal}, except for replacing the spectral norm with the trace norm.
		
		Following the same derivation progress, we may bound the error induced by the Born approximation and the Markov approximation with the trace norm. With the same notation, we have:
		\begin{eqnarray}
			\norm{\xi_B(t)}_1,\  \norm{\xi_M(t)}_1 \leq \frac{\tau_B}{\tau_R^2}.
		\end{eqnarray}
		
		We have provided the error bounds for the Born-Markov approximate master equation and the exact evolution equation. Now, we will present the error of the final states evolved by the two equations. To achieve this, we use Lemma 1 in Ref. \cite{mozgunov2020completely}.
		\begin{lemma}(Mozgunov and Lidar \cite{mozgunov2020completely})
			Assume that
			\begin{eqnarray}
				\dot{x}(t) = \mathcal{L}(x(t)) + \varepsilon, \quad \dot{y}(t) = \mathcal{L}(y(t)), \quad x(0) = y(0),
			\end{eqnarray}
			where $\mathcal{L}$ is a superoperator and $\Lambda$ is a positive constant such that $\sup_{\tau,x:\norm{x}_1=1}\norm{\mathcal{L}_\tau(x)}_1 \leq \Lambda$. Or, assume that 
			\begin{eqnarray}
				\dot{x}(t) = \int_0^t K_{t-\tau}(x(\tau))d\tau + \varepsilon, \quad \dot{y}(t) = \int_0^t K_{t-\tau}(y(\tau)), \quad x(0) = y(0),
			\end{eqnarray}
			where $K_{t-\tau}(x)$is a linear superoperator and $\Lambda$ is a positive constant such that $\sup_{t,x(\tau):\norm{x(\tau)}_1=1}\int_0^t\norm{K_{t-\tau}(x(\tau))}_1 d\tau \leq \Lambda$. Then:
			\begin{eqnarray}
				\forall t \leq \frac{c}{\Lambda}: \norm{x(t) - y(t)}_1 \leq (e^c - 1)\frac{\norm{\varepsilon}_1}{\Lambda}.
			\end{eqnarray}
		\end{lemma}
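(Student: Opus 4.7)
The plan is to treat the two cases in parallel by introducing the difference $z(t) = x(t) - y(t)$, which satisfies $z(0) = 0$ and obeys an inhomogeneous version of the same dynamics as $y$, with the inhomogeneity given by $\varepsilon$. The target inequality is then a Gr\"onwall-type bound on the growth of $z$ in trace norm, using only the operator bound $\Lambda$ on the generator (or on the convolution kernel).

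For the first, ``local'' case $\dot z(t) = \mathcal{L}(z(t)) + \varepsilon$ with $z(0)=0$, I would first upgrade the pointwise bound $\sup_{\|x\|_1 = 1}\|\mathcal{L}_\tau(x)\|_1 \leq \Lambda$ to the semigroup bound $\|e^{\mathcal{L}\tau}\|_{1\to 1} \leq e^{\Lambda\tau}$ (this follows from a standard Dyson-series or Trotter argument on the induced trace norm, using submultiplicativity). Duhamel's formula then gives $z(t) = \int_0^t e^{\mathcal{L}(t-s)}\varepsilon\, ds$, so by the triangle inequality and the semigroup bound, $\|z(t)\|_1 \leq \|\varepsilon\|_1 \int_0^t e^{\Lambda(t-s)}ds = \|\varepsilon\|_1 (e^{\Lambda t} - 1)/\Lambda$. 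Restricting to $t \leq c/\Lambda$ yields the claim.

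For the second, ``Volterra/memory-kernel'' case $\dot z(t) = \int_0^t K_{t-\tau}(z(\tau))\,d\tau + \varepsilon$ with $z(0)=0$, I would first rescale: by linearity of $K_{t-\tau}$, if $\|z(\tau)\|_1 \leq M$ uniformly on $[0,t]$, then $\int_0^t \|K_{t-\tau}(z(\tau))\|_1\, d\tau \leq M \Lambda$ (write $z(\tau) = \|z(\tau)\|_1 u(\tau)$ with $\|u(\tau)\|_1 = 1$ and apply the given supremum). Define $f(t) = \sup_{0\leq s \leq t}\|z(s)\|_1$, which is non-decreasing. Integrating the ODE and taking the trace norm gives
\begin{equation}
\|z(t)\|_1 \leq \|\varepsilon\|_1 t + \Lambda \int_0^t f(s)\, ds.
\end{equation}
Since the right-hand side is monotone in $t$, the same bound holds for $f(t)$. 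Setting $g(t) = \int_0^t f(s)\,ds$, one gets $g'(t) - \Lambda g(t) \leq \|\varepsilon\|_1 t$; multiplying by the integrating factor $e^{-\Lambda t}$ and integrating yields $f(t) \leq \|\varepsilon\|_1 (e^{\Lambda t} - 1)/\Lambda$, and restricting to $t \leq c/\Lambda$ gives the stated estimate.

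The main technical obstacle is the memory-kernel case: the kernel bound is given only on uniformly-unit-norm trajectories, so a naive direct estimate fails. The key step that unlocks the proof is the rescaling argument exploiting linearity of $K_{t-\tau}$ together with the passage to the non-decreasing envelope $f(t)$; once this is in place, integral Gr\"onwall closes the loop exactly as in the Markovian case and produces the same prefactor $(e^c - 1)/\Lambda$, so that both assumptions of the lemma lead to the same conclusion without separate constants.
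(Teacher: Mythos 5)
Your proof is correct. Note that the paper does not prove this lemma at all --- it is imported verbatim from Mozgunov and Lidar \cite{mozgunov2020completely} and used as a black box --- so the relevant comparison is with the proof in that reference, which is likewise a Duhamel/Gr\"onwall argument on the difference $z(t)=x(t)-y(t)$. Your two-case treatment is sound: the semigroup bound $\norm{e^{\mathcal{L}\tau}}_{1\to1}\leq e^{\Lambda\tau}$ plus Duhamel handles the local case, and the rescaling $z(\tau)=\norm{z(\tau)}_1 u(\tau)$ combined with the monotone envelope $f(t)=\sup_{s\leq t}\norm{z(s)}_1$ and integral Gr\"onwall correctly reproduces the prefactor $(e^{\Lambda t}-1)/\Lambda$ in the memory-kernel case (one can check directly that $f(t)\leq \norm{\varepsilon}_1 t+\Lambda\int_0^t f(s)\,ds$ integrates to exactly $\norm{\varepsilon}_1(e^{\Lambda t}-1)/\Lambda$, with the linear term cancelling). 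The only point worth flagging is that the statement's supremum is written over a time-dependent family $\mathcal{L}_\tau$, so in full generality the first case needs the time-ordered propagator rather than $e^{\mathcal{L}\tau}$; your parenthetical Dyson-series remark covers this, since the same bound $e^{\Lambda(t-s)}$ on the induced trace norm follows term by term.
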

		
		Using this lemma, by taking $\norm{\varepsilon}_1 = \frac{2\tau_B}{\tau_R^2}$, $\Lambda = \frac{1}{\tau_R}$ and $c = \Lambda t$, we obtain
		
		\begin{eqnarray}
			\label{eq: norm between BM and exact}
			\norm{\rho_{BM}(t) - \rho(t)}_1 \leq 2(e^{t/\tau_R} - 1)\frac{\tau_B}{\tau_R}.
		\end{eqnarray}
		
			Here, $\rho_{BM}(t)$ denotes the system density matrix obtained from the master equation derived through the Born-Markov approximation, written as:
			\begin{eqnarray}
				\label{eq: master equation with BM approximation}
				\frac{d}{dt}\rho_{BM}(t) &=& \sum_{\alpha,\beta}\int_0^t ds C_{\alpha\beta}(s) \left[A_\beta(t-s)\rho_{BM}(t) A_\alpha(t)\right. \notag \\
				&&\left.- A_\alpha(t)A_\beta(t-s)\rho_{BM}(t)\right] + h.c.
			\end{eqnarray}

		\subsubsection{Redfield limits of integration error}
		In addition to the error caused by the Born-Markov approximation, another error needs to be analyzed: the error due to changing the integral $\int^t \to \int^\infty$ in Eq. (\ref{eq: Born-Markov approximation master equation}). For $\xi_l(t)$, using inequality (\ref{eq: norm submultiplicative}), we have
		\begin{eqnarray}
			\norm{\xi_l(t)} \leq 4 \norm{\rho_{BM}(t)}_1 \sum_{\alpha,\beta}\int_t^\infty \abs{C_{\alpha,\beta}(s)} ds 
			\leq 4 \frac{\norm{\rho_{BM}(t)}_1}{t}\sum_{\alpha,\beta}\int_t^\infty s\abs{C_{\alpha,\beta}(s)} ds 
			\leq \frac{\norm{\rho_{BM}(t)}_1\tau_B}{t\tau_S}.
		\end{eqnarray}
		We augment this inequality by a trivial bound:
		\begin{eqnarray}
			\norm{\xi_l(t)} \leq 4 \norm{\rho_{BM}(t)}_1 \sum_{\alpha,\beta}\int_t^\infty \abs{C_{\alpha,\beta}(s)} ds 
			\leq \frac{\norm{\rho_{BM}(t)}_1}{\tau_S}\min(1,\frac{\tau_B}{t}).
		\end{eqnarray}
		We follow the derivation progress in Sec. 6.6 in Ref. \cite{mozgunov2020completely} and obtain:
		\begin{eqnarray}
			\label{eq: norm between BM and Redfield}
			&&\norm{\rho_{BM}(t) - \rho_R(t)}_1 \leq \left\{\begin{array}{ll}
				\norm{\rho_{BM}(t)}_1 \Big(e^{t/\tau_R}-1\Big), & t \leq \tau_B, \\
				\norm{\rho_{BM}(t)}_1 e^{t/\tau_R}\Big(\big(1-e^{-\tau_B/\tau_R}\big)+\frac{\tau_B}{\tau_R}\Big(\ln{\frac{t}{\tau_B}}+\sum_{n=1}^\infty\frac{(-1)^n }{n n!\tau_R^n}\big(t^n-\tau_B^n\big)\Big)\Big), & t > \tau_B.
			\end{array}\right. 
		\end{eqnarray}

		For $t\leq \tau_R\ln{\Big(1+\frac{\tau_R}{2\tau_B}\Big)}$, one can verify $\norm{\rho_{BM}(t)-\rho(t)}_1 \leq 1$ according to Eq. (\ref{eq: norm between BM and exact}). Note that $\norm{\rho(t)}_1$ always be identity since it is a density matrix, we obtain $\norm{\rho_{BM}(t)}_1 \leq 2$. One can verify that $\sum_{n=1}^\infty\frac{(-1)^n }{n n!\tau_R^n}\big(t^n-\tau_B^n\big)\leq 0$ for $t > \tau_B$. By combining these two inequalities with Eq. (\ref{eq: norm between BM and exact}) and Eq. (\ref{eq: norm between BM and Redfield}), we can prove Lemma \ref{lem: Redfield error}.
		
		Using inequality $1-e^{-x} < x$ and $\ln{x} < 2\sqrt{x} - 2$, we can obtain the following corollary.

		\begin{corollary}
			\label{corollary: Redfield error}
			When $\tau_B < t \leq  \tau_R$, the distance between the exact system state $\rho(t)$ and the state derived by the Redfield equation $\rho_R(t)$ can be bounded by
			\begin{eqnarray}
				\norm{\rho_R(t) - \rho(t)}_1 \leq 4e\frac{\sqrt{\tau_B}\sqrt{t}}{\tau_R}.
			\end{eqnarray}
		\end{corollary}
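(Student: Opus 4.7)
The plan is to start from Lemma~\ref{lem: Redfield error} in the regime $\tau_B < t \leq \tau_R$ and apply the two elementary inequalities hinted in the text, carefully tracking a cancellation that relies on the specific form $\ln x < 2\sqrt{x} - 2$. Since $t > \tau_B$, the second branch of $\varepsilon_l(t)$ is active, so the bound to work with is
\begin{equation*}
\norm{\rho_R(t)-\rho(t)}_1 \leq 2(e^{t/\tau_R}-1)\frac{\tau_B}{\tau_R} + 2e^{t/\tau_R}\left[(1-e^{-\tau_B/\tau_R}) + \frac{\tau_B}{\tau_R}\ln\frac{t}{\tau_B}\right].
\end{equation*}
The assumption $t \leq \tau_R$ gives $e^{t/\tau_R} \leq e$, which I will use to strip every exponential prefactor down to a constant.

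Next I would bound each of the three summands separately. For the first, the inequality $1-e^{-x} < x$ is equivalent to $e^x - 1 \leq x e^x$, giving $2(e^{t/\tau_R}-1)(\tau_B/\tau_R) \leq 2e\, t\tau_B/\tau_R^2$. For the $(1-e^{-\tau_B/\tau_R})$ factor, the same inequality yields $2e^{t/\tau_R}(1-e^{-\tau_B/\tau_R}) \leq 2e\,\tau_B/\tau_R$. For the logarithmic piece, $\ln(t/\tau_B) < 2\sqrt{t/\tau_B} - 2$ produces
\begin{equation*}
2e^{t/\tau_R}\frac{\tau_B}{\tau_R}\ln\frac{t}{\tau_B} \leq 4e\frac{\sqrt{\tau_B t}}{\tau_R} - 4e\frac{\tau_B}{\tau_R}.
\end{equation*}

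Adding the three estimates, the terms not proportional to $\sqrt{\tau_B t}/\tau_R$ collapse into $2e(\tau_B/\tau_R)(t/\tau_R - 1)$, which is non-positive precisely because $t \leq \tau_R$. What remains is $4e\sqrt{\tau_B t}/\tau_R$, which is the claimed bound. The main subtlety, and the only nontrivial feature of the argument, is recognizing that the $-2$ in $\ln x < 2\sqrt{x}-2$ is essential: it produces the $-4e\tau_B/\tau_R$ term that cancels the contribution from $(1-e^{-\tau_B/\tau_R})$ and, together with the constraint $t \leq \tau_R$, absorbs the leftover $2e\,t\tau_B/\tau_R^2$. Beyond this cancellation the argument is a direct substitution, so I expect no real obstacle; everything is controlled by the two hinted inequalities combined with the range of $t$.
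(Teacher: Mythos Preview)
Your argument is correct and is exactly the route the paper intends: it only records the two inequalities $1-e^{-x}<x$ and $\ln x<2\sqrt{x}-2$ and leaves the details to the reader, and you have filled them in accurately, including the cancellation $2e(\tau_B/\tau_R)(t/\tau_R-1)\le 0$ coming from $t\le\tau_R$. The only point worth making explicit is that $t>\tau_B$ ensures $2\sqrt{t/\tau_B}-2>0$, so replacing $e^{t/\tau_R}$ by $e$ in the logarithmic term is legitimate---but you already use this implicitly.
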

		
		\subsection{Spectral decomposition, rotating-wave approximation and Lindblad equation}
		\label{appsub: spectral decomposition}
		Let $H = \sum_j E_j \Pi_j$ be the spectral decomposition of the system Hamiltonian, where $E_j$ are eigenvalues of $H$, and $\Pi_j$ is the projection onto the subspace spanned by eigenvectors of $E_j$. The set of energy differences is $\Omega \equiv \{E_i-E_j\}$. Then, the generator of the Redfield equation can be written as
		\begin{eqnarray}
			\label{eq: Redfield generator}
			\mathcal{K}(T,J;t)\bullet = \sum_{\omega,\omega'\in\Omega}\sum_{\alpha,\beta} e^{i(\omega'-\omega)t} \Gamma_{\alpha,\beta}(T,J;\omega) 
			\Big(A_\beta(\omega)\bullet A_\alpha^\dag(\omega') - A_\alpha^\dag(\omega') A_\beta(\omega)\bullet\Big) + H.c.,
		\end{eqnarray}
		where $\Gamma_{\alpha,\beta}(T,J;\omega) = \int_{0}^{\infty}ds e^{i\omega s}C_{\alpha,\beta}(T,J;s)$ and $A_\alpha(\omega) = \sum_{m,n\vert E_n-E_m=\omega} \Pi_m A\Pi_n.$ Notice that $A_\alpha(\omega)$ operators satisfy $[H,A_\alpha(\omega)] = -\omega A_\alpha(\omega)$ and $A_\alpha = \sum_{\omega\in\Omega} A_\alpha(\omega)$. 
		
		It is convenient to express $\Gamma_{\alpha,\beta}(T,J;\omega)$ in the form 
		\begin{eqnarray}
			\Gamma_{\alpha,\beta}(T,J;\omega) = \frac{1}{2} \gamma_{\alpha,\beta}(T,J;\omega) + iS_{\alpha,\beta}(T,J;\omega),
		\end{eqnarray}
		where 
		\begin{eqnarray}
			\gamma_{\alpha,\beta}(T,J;\omega) &=& \Gamma_{\alpha,\beta}(T,J;\omega) + \Gamma^*_{\beta,\alpha}(T,J;\omega) = \int_{-\infty}^{\infty}ds e^{i\omega s}C_{\alpha,\beta}(T,J;s), \\
			S_{\alpha,\beta}(T,J;\omega) &=& \frac{1}{2i}\big(\Gamma_{\alpha,\beta}(T,J;\omega) - \Gamma_{\beta,\alpha}^*(T,J;\omega)\big).
		\end{eqnarray}
		For all $\omega$, $\gamma(\omega)$ is a positive semi-definite matrix, and $S(\omega)$ is a Hermitian matrix. With correlation function expression Eq. (\ref{eq:correlationGB}), one can verify that 
		
		\begin{eqnarray}
			\gamma_{\alpha,\beta}(T,J;\omega) = \left\{\begin{array}{ll}
				2\pi J_{\alpha,\beta}^*(\omega)\frac{e^{\omega\slash T}}{e^{\omega/T} - 1}, & \omega > 0, \\
				2\pi J_{\alpha,\beta}(-\omega)\frac{1}{e^{-\omega/T} - 1}, & \omega< 0,
			\end{array}\right. 
		\end{eqnarray}
		
		and 
		\begin{eqnarray}
			\label{eq: dissipation reason}
			\gamma_{\alpha,\beta}(T,J; -\omega) = e^{-\frac{\omega}{T}} \gamma_{\beta,\alpha}(T,J;\omega).
		\end{eqnarray}
		for positive $\omega$.
		
		Under the condition that $1/\tau_R(T,J)$ is small compared to differences between eigenenergies of $H$, we can apply the rotating-wave approximation to the Redfield equation \cite{breuer2002theory}. The rotating wave approximation neglects the terms in Eq. (\ref{eq: Redfield generator}) where 
		$\omega$ and $\omega'$ are not equal (i.e., $\omega \neq \omega'$), thereby yielding the Lindblad equation. The Lindblad equation reads
		\begin{eqnarray}
			\label{eq: Lindblad}
			\frac{d}{dt}\rho_L(t) &=& -i[H+H_{LS}(T,J),\rho_L(t)] + \mathcal{L}(T,J) \rho_L(t)
		\end{eqnarray}
		in the Schrödinger picture, where $H_{LS}(T,J) = \sum_\omega\sum_{\alpha,\beta}S_{\alpha,\beta}(T,J;\omega)A_\alpha^\dag(\omega)A_\beta(\omega)$ is the Lamb shift Hamiltonian satisfying $[H,H_{LS}] = 0$, and
		\begin{eqnarray}
			\mathcal{L}(T,J)\bullet = \frac{1}{2}\sum_{\omega\in\Omega}\sum_{\alpha,\beta} \gamma_{\alpha,\beta}(T,J;\omega) \Big(A_\beta(\omega)\bullet A_\alpha^\dag(\omega) - A_\alpha^\dag(\omega) A_\beta(\omega)\bullet\Big) + H.c..
		\end{eqnarray}
		
		Similar as is mentioned in Sec. \ref{sec: robustness} in the main text, according to Eq. (\ref{eq: Lindblad}), the energy dissipation rate is
		\begin{eqnarray}
			-\frac{d}{dt}\Tr(H\rho_L(t)) = - \Tr(H \mathcal{L}(T,J) \rho_L(t) )
			= \sum_{\omega\in\Omega}\omega \sum_{\alpha,\beta} \gamma_{\alpha,\beta}(T,J;\omega)
			\Tr\Big(A_\beta(\omega)\rho_L(t) A_\alpha^\dag(\omega)\Big),
		\end{eqnarray}
		where we use $[H,H_{LS}] = 0$ and $[H,A_\alpha(\omega)] = -\omega A_\alpha(\omega)$. Since $\{\gamma_{\alpha,\beta}(T,J;\omega)\}$ is a positive is a positive semi-definite matrix, one can verify that $\sum_{\alpha,\beta} \gamma_{\alpha,\beta}(T,J;\omega)\Tr\Big(A_\beta(\omega)\rho_L(t) A_\alpha^\dag(\omega)\Big)$ is always non-negative by diagonalizing this matrix (See Appendix \ref{app: dissipation} for more details). This result suggests that the effect of any bath described by the spectral density $J(\omega)$ on the system can be divided into two parts based on the sign of $\omega$: the part where $\omega>0$ represents the bath reducing the system's energy, while the part where $\omega<0$ represents the bath increasing the system's energy. We provide an upper bound for the latter part in Appendix \ref{app: dissipation}.

		\section{Quasi-steady states and Gaussian stabilization}
		\label{app: steady state}
		In the following two sections, we introduce Quasi-steady states and their energy dissipation under a specific form of the Redfield superoperator, in preparation for proving Lemma \ref{lem:super} in Appendix \ref{app: proof of lemma1}. As mentioned in Sec. \ref{sec: robustness}, if the super bath can be decomposed into a good bath and a complementary bath, the Lindblad equation can also be decomposed into corresponding terms. Furthermore, at sufficiently low temperatures, the term corresponding to the complementary bath will not increase the system's energy under the conditions described in the maintext. However, in open systems where the rotating wave approximation does not hold (e.g., many-body systems), the Lindblad equation may fail. In these two sections, we present concepts and methods designed to address the challenges arising from the inability to apply the rotating-wave approximation. Ultimately, we show that for the Redfield equation, the lower the temperature, the smaller the upper bound on the energy increase of the system due to the term corresponding to the complementary bath.

		In this section, we introduce quasi-steady states and its preparation. We start by defining the Gaussian energy filter.
		
		\begin{definition}
			{\bf Gaussian energy filter.} A Gaussian energy filter $G_\sigma(x)$ with positive real parameter $\sigma$ (i.e., $\sigma > 0$) and real parameter $x$ is a projection 
			\begin{eqnarray}
				G_\sigma(x) &\equiv& \sum_j \sqrt{\frac{\sqrt{2} \sigma}{\sqrt{\pi}}} e^{-\sigma^2 (E_j - x)^2} \Pi_j.
			\end{eqnarray}
			
		\end{definition}
		
		With the Gaussian energy filter, the Gaussian stabilization defined in Definition \ref{def:GS} can be expressed as
		\begin{eqnarray}
			\mathcal{G}_\sigma \rho &=& \int_{-\infty}^{+\infty} dx \, G_\sigma(x) \rho G_\sigma(x).
		\end{eqnarray}
		For an arbitrary state $\rho \in \mathbb{S}$, the operation $\mathcal{G}_\sigma$ reduces off-diagonal elements of $\rho$ in the following way: 
		\begin{eqnarray}
			\mathcal{G}_\sigma \rho &=& \sum_{i,j} e^{-\frac{\sigma^2 (E_i - E_j)^2}{2}} \Pi_i \rho \Pi_j.
		\end{eqnarray}

		Similarly, we define the rectangular energy filter and  $\delta$-stationary states.
		\begin{definition}
			{\bf Rectangular energy filter.} A rectangular energy filter with positive real parameter $\delta$ (i.e.,$\delta > 0$) and real parameter $x$ is a projection 
			\begin{eqnarray}
				F_\delta(x) = \sum_j f_{\delta,j}(x) \Pi_j,
			\end{eqnarray}
			where 
			\begin{eqnarray}
				f_{\delta,j}(x) = 
				\begin{cases} 
					1, & x-\delta \leq E_j \leq x+\delta, \\
					0, & \text{otherwise}.
				\end{cases}
			\end{eqnarray}
		\end{definition}

		\begin{definition}
			{\bf $\delta$-stationary state.} A state $\rho_s \in \mathbb{S}$ is said to be a $\delta$-stationary state if and only if there exists a probability density function $p(x)$ and operator-valued function $\rho'(x)$ such that: (i) $\rho'(x) \in \mathbb{S}$ for all $x$, (ii) $\rho'(x) = F_\delta(x) \rho'(x) F_\delta(x)$ for all $x$, and (iii) 
			\begin{eqnarray}
				\rho_s = \int_{-\|H\|_\infty - \delta}^{\|H\|_\infty + \delta} dx \, p(x) \rho'(x).
				\label{eq: steady}
			\end{eqnarray}
			Here, $\int_{-\|H\|_\infty - \delta}^{\|H\|_\infty + \delta} dx \, p(x) = 1$. 
		\end{definition}

		The $\delta$-stationary state is also referred to as the quasi-steady state, as it shares similar properties with the steady state, which will be demonstrated in Appendix \ref{app: dissipation}. With these definitions, we introduce how to prepare quasi-steady states through Gaussian stabilization by the following lemma:
		
		\begin{lemma}
			\label{lem: delta steady state}
			For all states $\rho \in \mathbb{S}$, there exists a $\delta$-stationary state $\rho_s \in \mathbb{S}$ such that 
			\begin{eqnarray}
				\|\mathcal{G}_\sigma \rho - \rho_s\|_1  \leq \frac{8\sqrt{2} \sigma (\|H\|_\infty + \delta)}{\sqrt{\pi}} e^{-\sigma^2 \delta^2} + 2 e^{-2\sigma^2 \delta^2}.
			\end{eqnarray}
		\end{lemma}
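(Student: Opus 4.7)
The plan is to construct the $\delta$-stationary state $\rho_s$ explicitly by truncating the Kraus decomposition of $\mathcal{G}_\sigma$ with the rectangular filter $F_\delta$, and then to bound the trace-norm error via the triangle inequality. Starting from the integral form $\mathcal{G}_\sigma \rho = \int dx\,G_\sigma(x)\rho G_\sigma(x)$, I would define the non-negative (unnormalized) operator
\begin{equation}
\rho_s' = \int_{-\infty}^{+\infty} dx\, F_\delta(x) G_\sigma(x) \rho G_\sigma(x) F_\delta(x),
\end{equation}
and take $\rho_s = \rho_s'/Z$ with $Z = \Tr\rho_s' \in (0,1]$. Since $F_\delta(x) = 0$ for $|x| > \|H\|_\infty + \delta$ and each non-vanishing integrand is supported inside the energy window $[x-\delta, x+\delta]$, the normalized $\rho_s$ admits a decomposition $\int p(x)\rho'(x)dx$ of the exact form required by the $\delta$-stationary definition, with $p(x) = \Tr[F_\delta G_\sigma\rho G_\sigma F_\delta]/Z$.

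Next, by the triangle inequality and the identity $\|\rho_s' - \rho_s\|_1 = (1-Z)/Z \cdot \|\rho_s'\|_1 = 1 - Z$, the error splits as
\begin{equation}
\|\mathcal{G}_\sigma \rho - \rho_s\|_1 \leq \|\mathcal{G}_\sigma \rho - \rho_s'\|_1 + (1-Z).
\end{equation}
For the first term I would use the orthogonal splitting $G_\sigma = F_\delta G_\sigma + Q_\delta G_\sigma$ with $Q_\delta = \openone - F_\delta$, expanding $G_\sigma\rho G_\sigma - F_\delta G_\sigma\rho G_\sigma F_\delta$ into three cross terms each containing at least one factor of $Q_\delta G_\sigma$, and applying the H\"older inequality $\|ABC\|_1 \leq \|A\|_\infty\|B\|_1\|C\|_\infty$. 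The crucial pointwise estimate
\begin{equation}
\|Q_\delta(x) G_\sigma(x)\|_\infty \leq \sqrt{\sqrt{2}\sigma/\sqrt{\pi}}\,e^{-\sigma^2\delta^2}
\end{equation}
follows directly from the constraint $|E_j - x| > \delta$ enforced by $Q_\delta$. Combined with $\|G_\sigma\|_\infty \leq \sqrt{\sqrt{2}\sigma/\sqrt{\pi}}$, the integrand is pointwise of order $\sigma e^{-\sigma^2\delta^2}/\sqrt{\pi}$; restricting to the effective interval of length $2(\|H\|_\infty+\delta)$ produces the leading $\sigma(\|H\|_\infty+\delta)\,e^{-\sigma^2\delta^2}/\sqrt{\pi}$ contribution. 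On the exterior $|x| > \|H\|_\infty + \delta$, the $F_\delta$ term vanishes and the leftover $\int G_\sigma\rho G_\sigma dx$ is controlled by the Gaussian identity $\int_{|E_j - x|>\delta} g_j(x)^2 dx \leq e^{-2\sigma^2\delta^2}$, producing one copy of $e^{-2\sigma^2\delta^2}$. Applying the same identity to $1 - Z = \Tr\bigl[\rho(\openone - \int F_\delta G_\sigma^2 F_\delta dx)\bigr]$ produces the second copy, summing to the $2e^{-2\sigma^2\delta^2}$ piece in the lemma.

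The main obstacle will be the careful accounting of integration constants, specifically deciding which of the three cross terms to bound pointwise through $\|Q_\delta G_\sigma\|_\infty$ (these generate the polynomial prefactor $\sigma(\|H\|_\infty+\delta)/\sqrt{\pi}$ attached to $e^{-\sigma^2\delta^2}$) versus which to integrate first via the exact Gaussian identity (these give the clean $\mathcal{O}(1)$ prefactor attached to $e^{-2\sigma^2\delta^2}$), so that no spurious $\sigma\delta$ factors contaminate the tail term. Since no spectral assumption is placed on $H$ beyond $\|H\|_\infty < \infty$, the argument is purely analytic and rests only on the Gaussian envelope of the filter together with the elementary operator inequality $\sum_j a_j \Pi_j \leq (\max_j a_j)\,\openone$ for diagonal positive combinations.
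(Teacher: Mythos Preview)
Your proposal is correct and follows essentially the same strategy as the paper: construct $\rho_s$ by inserting $F_\delta(x)$ on both sides of the Kraus integrand, then control the error through the pointwise bound $\|(\openone-F_\delta)G_\sigma\|_\infty\leq\sqrt{\sqrt{2}\sigma/\sqrt{\pi}}\,e^{-\sigma^2\delta^2}$ on the finite interval and the Gaussian tail $\mathrm{erfc}(\sqrt{2}\sigma\delta)\leq e^{-2\sigma^2\delta^2}$ outside it.

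The only organizational difference is that the paper routes the triangle inequality through an extra intermediate, the spatially truncated state $\mathcal{G}_\sigma^t\rho=\int_{-\|H\|_\infty-\delta}^{\|H\|_\infty+\delta}G_\sigma\rho G_\sigma\,dx$, and then bounds $|\Tr\rho_s^u-1|$ indirectly by the same triangle chain, picking up an additional $\frac{4\sqrt{2}\sigma(\|H\|_\infty+\delta)}{\sqrt{\pi}}e^{-\sigma^2\delta^2}$ in the normalization error; this is why the final constant is $8\sqrt{2}$ rather than $4\sqrt{2}$. Your direct evaluation $1-Z=\mathrm{erfc}(\sqrt{2}\sigma\delta)\leq e^{-2\sigma^2\delta^2}$ (which works because $\int F_\delta G_\sigma^2\,dx$ is actually a scalar multiple of the identity) is sharper and avoids that loss, so your route in fact yields the stated bound with room to spare.
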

		
		\begin{proof}
			First, we construct the $\delta$-stationary state $\rho_s$. We take 
			\begin{eqnarray}
				q(x) = \Tr[F_\delta(x) G_\sigma(x) \rho G_\sigma(x) F_\delta(x)],
			\end{eqnarray}
			\begin{eqnarray}
				p(x) = \frac{q(x)}{\int_{-\|H\|_\infty - \delta}^{\|H\|_\infty + \delta} dx' \, q(x')}
			\end{eqnarray}
			and 
			\begin{eqnarray}
				\rho'(x) &=& \frac{F_\delta(x) G_\sigma(x) \rho G_\sigma(x) F_\delta(x)}{q(x)}.
			\end{eqnarray}
			Then, we have a $\delta$-stationary state $\rho_s$ according to Eq. (\ref{eq: steady}). 
			
			Second, we introduce the truncated Gaussian quasi-steady state 
			\begin{eqnarray}
				\mathcal{G}^t_\sigma \rho &=& \int_{-\|H\|_\infty - \delta}^{\|H\|_\infty + \delta} dx \, G_\sigma(x) \rho G_\sigma(x).
			\end{eqnarray}
			When $|x| \geq \|H\|_\infty$, 
			\begin{eqnarray}
				\|G_\sigma(x) \rho G_\sigma(x)\|_1 \leq \sqrt{\frac{\sqrt{2} \sigma}{\sqrt{\pi}}} e^{-2\sigma^2 (|x| - \|H\|_\infty)^2}.
			\end{eqnarray}
			Therefore, 
			\begin{eqnarray}\label{eq:error1}
				\|\mathcal{G}_\sigma \rho - \mathcal{G}^t_\sigma \rho\|_1 &\leq& \mathrm{erfc}(\sqrt{2} \sigma \delta) \leq e^{-2\sigma^2 \delta^2}.
			\end{eqnarray}
			
			We have 
			\begin{eqnarray}
				\|F_\delta(x) G_\sigma(x) - G_\sigma(x)\|_\infty \leq \sqrt{\frac{\sqrt{2} \sigma}{\sqrt{\pi}}} e^{-\sigma^2 \delta^2}.
			\end{eqnarray}
			Using the inequality (\ref{eq: norm submultiplicative}), together with $\|\rho\|_1 = 1$, $\|F_\delta(x)\|_\infty = 1$ and $\|G_\sigma(x)\|_\infty \leq \sqrt{\frac{\sqrt{2} \sigma}{\sqrt{\pi}}}$, we have 
			\begin{eqnarray}
				\|F_\delta(x) G_\sigma(x) \rho G_\sigma(x) F_\delta(x) - G_\sigma(x) \rho G_\sigma(x)\|_1 \leq \frac{2\sqrt{2} \sigma}{\sqrt{\pi}} e^{-\sigma^2 \delta^2}.
			\end{eqnarray}
			Let the unnormalized $\delta$-stationary state be 
			\begin{eqnarray}
				\rho^u_s = \int_{-\|H\|_\infty - \delta}^{\|H\|_\infty + \delta} dx \, q(x) \rho'(x).
			\end{eqnarray}
			Then, 
			\begin{eqnarray}\label{eq:error2}
				\|\mathcal{G}^t_\sigma \rho - \rho^u_s\|_1 \leq \frac{4\sqrt{2} \sigma (\|H\|_\infty + \delta)}{\sqrt{\pi}} e^{-\sigma^2 \delta^2}.
			\end{eqnarray}
			
			Similar to the inequality (\ref{eq: partial trace norm}), we have $|\Tr(A)| \leq \|A\|_1$, and we obtain 
			\begin{eqnarray}
				|\Tr(\mathcal{G}_\sigma \rho) - \Tr(\mathcal{G}^t_\sigma \rho)| \leq e^{-2\sigma^2 \delta^2}
			\end{eqnarray}
			and 
			\begin{eqnarray}
				|\Tr(\mathcal{G}^t_\sigma \rho) - \Tr(\rho^u_s)| \leq \frac{4\sqrt{2} \sigma (\|H\|_\infty + \delta)}{\sqrt{\pi}} e^{-\sigma^2 \delta^2}.
			\end{eqnarray}
			Then, 
			\begin{eqnarray}
				|\Tr(\rho^u_s) - 1| \leq \frac{4\sqrt{2} \sigma (\|H\|_\infty + \delta)}{\sqrt{\pi}} e^{-\sigma^2 \delta^2} + e^{-2\sigma^2 \delta^2}.
			\end{eqnarray}
			Notice that 
			\begin{eqnarray}
				\rho_s = \frac{\rho^u_s}{\Tr(\rho^u_s)}.
			\end{eqnarray}
			Therefore, 
			\begin{eqnarray}
				\label{eq:error3}
				\|\rho^u_s - \rho_s\|_1 \leq \left|\Tr(\rho^u_s) - 1\right| \|\rho_s\|_1 \leq \frac{4\sqrt{2} \sigma (\|H\|_\infty + \delta)}{\sqrt{\pi}} e^{-\sigma^2 \delta^2} + e^{-2\sigma^2 \delta^2}.
			\end{eqnarray}
			
			Finally, with Eqs.~(\ref{eq:error1}), (\ref{eq:error2}), and (\ref{eq:error3}), we have
			\begin{eqnarray}
				\|\mathcal{G}_\sigma \rho - \rho_s\|_1 \leq \|\mathcal{G}_\sigma \rho - \mathcal{G}^t_\sigma \rho\|_1 + \|\mathcal{G}^t_\sigma \rho - \rho^u_s\|_1 + \|\rho^u_s - \rho_s\|_1 
				\leq \frac{8\sqrt{2} \sigma (\|H\|_\infty + \delta)}{\sqrt{\pi}} e^{-\sigma^2 \delta^2} + 2 e^{-2\sigma^2 \delta^2}.
			\end{eqnarray}
		\end{proof}

		\section{Dissipation of quasi-steady states}
		\label{app: dissipation}

		To overcome the difficulties that may arise when the rotating wave approximation fails, we also introduce several complementary concepts in addition to introducing the steady state. The main conclusions of this section are summarized in Lemma \ref{lem: cg energy transfer} and Lemma \ref{lem: energy increase} in \ref{appsub: main result}.

		\subsection{Coarse-grained Hamiltonian}
		
		The coarse-grained Hamiltonian with parameters $(\delta,x)$ reads 
		\begin{eqnarray}
			H_\delta(x) = \sum_{j=-\infty}^{\infty} (x+2j\delta)F_\delta(x+2j\delta)
		\end{eqnarray}
		Then, 
		\begin{eqnarray}
			\norm{H_\delta(x)-H}_\infty \leq \delta
		\end{eqnarray}
		and 
		\begin{eqnarray}
			\norm{e^{-iH_\delta(x)t} - e^{-iHt}}_\infty \leq \delta\abs{t},
		\end{eqnarray}
		where we use $\abs{ e^{-i\delta t} - 1 } = 2\abs{ \sin\frac{\delta t}{2} } \leq \delta \abs{t}$. 
		
		Let $A_{\alpha,\delta}(t,x) = e^{iH_\delta(x)t}A_\alpha e^{-iH_\delta(x)t}$ be time-dependent system operators according to the coarse-grained Hamiltonian. Then, 
		\begin{eqnarray}
			\norm{A_{\alpha,\delta}(t,x) - A_\alpha(t)}_\infty \leq 2\delta \abs{t}. 
		\end{eqnarray}
		For a given set of correlation functions $C_{\alpha,\beta}(s)$, the Redfield equation with the coarse-grained Hamiltonian is given by 
		\begin{eqnarray}
			\mathcal{K}_{R,\delta}(t,x)\bullet = \sum_{\alpha,\beta}\int_0^\infty ds C_{\alpha\beta}(s) \big(A_{\beta,\delta}(t-s,x)\bullet A_{\alpha,\delta}(t,x) - A_{\alpha,\delta}(t,x)A_{\beta,\delta}(t-s,x)\bullet\big) + H.c.,
		\end{eqnarray}
		and we have 
		\begin{eqnarray}
			\norm{\mathcal{K}_{R,\delta}(t,x) - \mathcal{K}(t)}_1 
			\leq \sum_{\alpha,\beta}\int_0^\infty ds \delta(16\abs{t}+8\abs{s})C_{\alpha\beta}(s) 
			\leq 2\delta\left(\frac{2\abs{t}}{\tau_R} + \frac{\tau_B}{\tau_R}\right).
		\end{eqnarray}
		
		\subsection{Spectral decomposition}
		
		Similarly, let $\Omega_\delta \equiv \{2j\delta \vert j \in \mathbb{Z} \}$ and $\omega,\omega' \in \Omega_\delta$, follow the spectral decomposition method mentioned in Appendix \ref{appsub: spectral decomposition}, we have
		\begin{eqnarray}
			\mathcal{K}_{R,\delta}(t,x)\bullet = \sum_{\omega,\omega'\in\Omega_\delta}\sum_{\alpha,\beta} e^{i(\omega'-\omega)t} \Gamma_{\alpha,\beta}(\omega) 
			\big(A_{\beta,\delta}(\omega,x)\bullet A_{\alpha,\delta}^\dag(\omega',x) - A_{\alpha,\delta}^\dag(\omega',x) A_{\beta,\delta}(\omega,x)\bullet\big) + H.c.,
		\end{eqnarray}
		where $A_{\alpha,\delta}(\omega,x)$ are operators satisfying $[H_\delta(x),A_{\alpha,\delta}(\omega,x)] = -\omega A_{\alpha,\delta}(\omega,x)$ and 
		\begin{eqnarray}
			A_\alpha = \sum_{\omega\in\Omega_\delta} A_{\alpha,\delta}(\omega,x).
		\end{eqnarray}
		
		\subsection{Dissipation of $\delta$-stationary states}
		\label{appsub: main result}
		We denote the total energy transferred from the system to the bath at the time t according to the first-order contribution of the Redfield equation as
		\begin{eqnarray}
			D_R(t,\rho_s)\equiv -\int_0^t ds \Tr\big(H\mathcal{K}(s)\rho_s\big)
		\end{eqnarray}
		when $\rho_s$ is the initial state.
		Substitute Eq. (\ref{eq: steady}) into $D_R(t,\rho_s)$, we obtain 
		\begin{eqnarray}
			D_R(t,\rho_s) = -\int_{-\norm{H}_\infty-\delta}^{\norm{H}_\infty+\delta} dx p(x) \Tr\big(H\mathcal{K}(s)\rho'(x)\big) = -\int_{-\norm{H}_\infty-\delta}^{\norm{H}_\infty+\delta} dx p(x) D_R(t,\rho'(x))
		\end{eqnarray}
		Similarly, we define the coarse-grained energy transfer 
		\begin{eqnarray}
			D_{R,\delta}(t,p,\rho') \equiv \int_{-\norm{H}_\infty-\delta}^{\norm{H}_\infty+\delta} dx \, p(x) D'_{R,\delta}(t,\rho',x),
		\end{eqnarray}
		where 
		\begin{eqnarray}
			D'_{R,\delta}(t,\rho',x) \equiv -\int_0^t ds \Tr\big(H_\delta(x)\mathcal{K}_{R,\delta}(s,x)\rho'(x)\big).
		\end{eqnarray}
		One then can verify the following lemma by using the triangle inequality.
		\begin{lemma}
			\label{lem: cg energy transfer}
			For any $\delta-$steady initial state $\rho_s$ in the form of Eq. (\ref{eq: steady}), the total energy transferred from the system to the bath at the time $t$  according to the first-order contribution of the Redfield equation, i.e., $D_R(t,\rho_s)$, can be approximated by the corresponding coarse-grained energy transfer, i.e., $D'_{R,\delta}(t,\rho',x)$, and the inequality
			\begin{eqnarray}
				\label{eq: energy transfer error 1}
				\abs{D_{R,\delta}(t,p,\rho') - D_R(t,\rho_s)} 
				\leq \delta \Big(\frac{t}{\tau_R} + 2\Big(\frac{t}{\tau_R} + \frac{\tau_B}{\tau_R}\Big)\norm{H}_\infty t\Big).
			\end{eqnarray}
			holds.
		\end{lemma}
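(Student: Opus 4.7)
The plan is a direct triangle-inequality estimate that reuses the coarse-graining bounds already established in this appendix. First, by linearity of $\mathcal{K}(s)$ together with the $\delta$-stationary decomposition $\rho_s = \int dx\, p(x)\rho'(x)$, I would rewrite $D_R(t,\rho_s) = -\int dx\, p(x)\int_0^t ds\, \Tr(H\mathcal{K}(s)\rho'(x))$, placing it in the same double-integral form as $D_{R,\delta}(t,p,\rho')$. The integrand difference then splits naturally as $\Tr\bigl((H-H_\delta(x))\mathcal{K}(s)\rho'(x)\bigr) + \Tr\bigl(H_\delta(x)(\mathcal{K}(s)-\mathcal{K}_{R,\delta}(s,x))\rho'(x)\bigr)$, isolating the error from coarse-graining the Hamiltonian from the error from coarse-graining the generator.

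Next, I would bound each piece using the trace-norm submultiplicativity inequality~(\ref{eq: norm submultiplicative}). The first piece combines $\norm{H-H_\delta(x)}_\infty \leq \delta$, proved earlier in this appendix, with the uniform generator estimate $\norm{\mathcal{K}(s)\rho'(x)}_1 \leq 1/\tau_R$, which follows directly from the definition of $\tau_R$ in Eq.~(\ref{eq: tauR}) together with the normalization $\norm{A_\alpha}_\infty = 1$. This yields a pointwise bound $\delta/\tau_R$. The second piece invokes the already-derived superoperator-norm estimate $\norm{\mathcal{K}_{R,\delta}(s,x)-\mathcal{K}(s)}_1 \leq 2\delta(2s/\tau_R + \tau_B/\tau_R)$ in combination with $\norm{H_\delta(x)}_\infty \leq \norm{H}_\infty + \delta$, giving $2\delta(2s/\tau_R + \tau_B/\tau_R)\norm{H}_\infty$ to leading order in $\delta$, with a subleading $\delta^2$ correction absorbed into the main term.

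Integrating these pointwise bounds over $s \in [0,t]$ then contributes $\delta t/\tau_R$ from the first piece and, via $\int_0^t 2s\, ds = t^2$, the contribution $2\delta(t/\tau_R + \tau_B/\tau_R)\norm{H}_\infty t$ from the second piece. Averaging over $x$ against $p(x)$ is a convex combination and preserves the bounds, so summing the two contributions reproduces exactly the right-hand side of the lemma. The only mildly delicate step is ensuring the uniform-in-state bound $\norm{\mathcal{K}(s)\rho}_1 \leq 1/\tau_R$, which I would justify by the same term-by-term estimate used to obtain the $1/\tau_R$ bound on $\norm{d\rho/dt}_1$ in Sec.~\ref{subsection: derivation of Redfield equation}, expanding $\mathcal{K}(s)\rho$ into four correlation-weighted commutator terms and bounding each by $\norm{A_\alpha(s)}_\infty \norm{A_\beta(s-u)}_\infty \abs{C_{\alpha,\beta}(u)}$. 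Once this uniform bound is in hand, the remainder is a mechanical triangle-inequality computation.
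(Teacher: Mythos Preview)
Your approach is essentially the same as the paper's: the paper's proof is just the one-line remark that the lemma follows from the triangle inequality combined with the previously established coarse-graining bounds $\norm{H_\delta(x)-H}_\infty\leq\delta$ and $\norm{\mathcal{K}_{R,\delta}(s,x)-\mathcal{K}(s)}_1\leq 2\delta(2\abs{s}/\tau_R+\tau_B/\tau_R)$, which is exactly your strategy.

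One small correction: the particular splitting you chose, $H\mathcal{K} - H_\delta\mathcal{K}_{R,\delta} = (H-H_\delta)\mathcal{K} + H_\delta(\mathcal{K}-\mathcal{K}_{R,\delta})$, forces the factor $\norm{H_\delta(x)}_\infty \leq \norm{H}_\infty + \delta$ in the second piece, producing an extra $O(\delta^2)$ term that cannot literally be ``absorbed'' into the stated bound (the lemma is an exact inequality, not a leading-order statement). The fix is immediate: reverse the roles and write $H\mathcal{K} - H_\delta\mathcal{K}_{R,\delta} = H(\mathcal{K}-\mathcal{K}_{R,\delta}) + (H-H_\delta)\mathcal{K}_{R,\delta}$. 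The first piece then carries $\norm{H}_\infty$ directly, and the second requires only $\norm{\mathcal{K}_{R,\delta}(s,x)\rho'(x)}_1 \leq 1/\tau_R$, which holds by the same four-term estimate you describe since $\norm{A_{\alpha,\delta}(t,x)}_\infty = \norm{A_\alpha}_\infty = 1$. With this swap the bound of the lemma is reproduced exactly.
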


		Using the spectral decomposition, the coarse-grained energy transfer $D'_{R,\delta}(t,\rho',x)$ can be rewritten as
		\begin{eqnarray}
			\label{eq: SD of cg energy transfer}
			D'_{R,\delta}(t,\rho',x) =  \int_0^t ds \sum_{\omega,\omega'\in\Omega_\delta}\sum_{\alpha,\beta} e^{i(\omega'-\omega)s} 
			\big(X_{\alpha,\beta}(\omega',\omega)+iY_{\alpha,\beta}(\omega',\omega)\big) 
			Z_{\beta,\alpha,\delta}(\omega,\omega',x),
		\end{eqnarray}
		where 
		\begin{eqnarray}
			X_{\alpha,\beta}(\omega',\omega) = \frac{1}{2}\big(\omega'\gamma_{\alpha,\beta}(\omega) + \omega\gamma_{\alpha,\beta}(\omega')\big), \quad Y_{\alpha,\beta}(\omega',\omega) = \omega' S_{\alpha,\beta}(\omega) - \omega S_{\alpha,\beta}(\omega'),
		\end{eqnarray}
		and 
		\begin{eqnarray}
			Z_{\beta,\alpha,\delta}(\omega,\omega',x) 
			= \Tr\Big(A_{\beta,\delta}(\omega,x) \rho'(x) A_{\alpha,\delta}^\dag(\omega',x)\Big) 
			= \delta_{\omega,\omega'}\Tr\Big(A_{\beta,\delta}(\omega,x) \rho'(x) A_{\alpha,\delta}^\dag(\omega,x)\Big).
		\end{eqnarray}
		Then, 
		\begin{eqnarray}
			D_{R,\delta}(t,p,\rho') &=& \int_{-\norm{H}_\infty-\delta}^{\norm{H}_\infty+\delta} dx \, p(x)\, t \sum_{\omega\in\Omega_\delta}\sum_{\alpha,\beta} \omega\gamma_{\alpha,\beta}(\omega) 
			\Tr\Big(A_{\beta,\delta}(\omega,x) \rho'(x) A_{\alpha,\delta}^\dag(\omega,x)\Big) \notag \\
			&=& D'_{R,\delta,+}(t,\rho',x) + D'_{R,\delta,-}(t,\rho',x),
		\end{eqnarray}
		where
		\begin{eqnarray}
			D_{R,\delta,+}(t,p,\rho') &\equiv& \int_{-\norm{H}_\infty-\delta}^{\norm{H}_\infty+\delta} dx \, p(x)\, t \sum_{\substack{\omega\in\Omega_\delta \\ \omega>0}}\omega\sum_{\alpha,\beta} \gamma_{\alpha,\beta}(\omega) 
			\Tr\Big(A_{\beta,\delta}(\omega,x) \rho'(x) A_{\alpha,\delta}^\dag(\omega,x)\Big), \\
			D_{R,\delta,-}(t,p,\rho') &\equiv& \int_{-\norm{H}_\infty-\delta}^{\norm{H}_\infty+\delta} dx \, p(x)\, t \sum_{\substack{\omega\in\Omega_\delta \\ \omega<0}}\omega\sum_{\alpha,\beta} \gamma_{\alpha,\beta}(\omega) 
			\Tr\Big(A_{\beta,\delta}(\omega,x) \rho'(x) A_{\alpha,\delta}^\dag(\omega,x)\Big)
		\end{eqnarray}

		One can find that $D'_{R,\delta}(t,\rho',x)$ is the energy transfer according to the Lindblad equation after rotating wave approximation. Similar to Appendix \ref{appsub: spectral decomposition}, $D'_{R,\delta}(t,\rho',x)$ can be divided into two terms, i.e., $D'_{R,\delta,+}(t,\rho',x)$ and $D'_{R,\delta,-}(t,\rho',x)$. The former represents energy dissipation from the system, while the latter represents energy increase in the system.
		
		We then are going to find a lower bound of $D_{R,\delta,-}(t,p,\rho')$. Since the matrix $\{ \gamma_{\alpha,\beta}(\omega)\}$ is semi-positive, we can diagonalize it with a unitary matrix. We have $\gamma(\omega) = V(\omega)\Lambda(\omega) V^\dagger(\omega)$ where $V(\omega)$ is an unitary matrix with elements $\{v_{ij}(\omega)\}$ and $\Lambda(\omega)$ is a diagonal matrix with diagonal elements $\{\lambda_i(\omega)\}$. Then we have
		$\gamma_{\alpha,\beta}(\omega) = \sum_{i = 1}^N \lambda_i(\omega) v_{\alpha i}(\omega)v^\ast_{\beta i}(\omega).$ According to the Gershgorin circle theorem, eigenvalue $\lambda_i(\omega)$ satisfy
		$\lambda_i(\omega) \leq \sum_{j} \abs{\gamma_{i,j}(\omega)}.$
		
		Define operator $C_i(\omega) \equiv \frac{1}{\sqrt{N}}\sum_\alpha v^\ast_{\alpha i}(\omega) A_\alpha$. Using $\sum_\alpha \abs{v_{\alpha i}(\omega)}^2 = 1,$ one can prove that $\norm{C_i(\omega)}_\infty \leq 1.$
		
		Thus, the energy transfer $D_{R,\delta,-}(t,p,\rho')$ can be rewritten as
		\begin{eqnarray}
			D_{R,\delta,-}(t,p,\rho') =  \int_{-\norm{H}_\infty-\delta}^{\norm{H}_\infty+\delta} dx \, p(x)\, t \sum_{\substack{\omega\in\Omega_\delta \\ \omega<0}}\omega \sum_{i = 1}^N N\lambda_i(\omega)
			\Tr\Big(\Pi(x-\omega)C_i(\omega) \rho'(x) C^\dagger_i(\omega) \Pi(x-\omega) \Big). 
		\end{eqnarray}
		
		Since $0\leq\Tr\Big(\Pi(x-\omega)C_i(\omega) \rho'(x) C^\dagger_i(\omega) \Pi(x-\omega) \Big)\leq 1$, we have
		\begin{eqnarray}
			D_{R,\delta,-}(t,p,\rho') \geq \frac{N t}{2\tau_R} \omega e^{\omega/T} , \quad \omega < 0,
		\end{eqnarray}
		where we use Eq. (\ref{eq: dissipation reason}). Thus, we have
		\begin{eqnarray}
			\label{eq: energy transfer error 2}
			D_{R,\delta,-}(t,p,\rho') \geq -\frac{N t T}{2e\tau_R}.
		\end{eqnarray}
		
		Combining from Eq. (\ref{eq: SD of cg energy transfer}) to Eq. (\ref{eq: energy transfer error 2}), we can prove the following Lemma \ref{lem: energy increase} using triangle inequality.
		\begin{lemma}
			\label{lem: energy increase}
			A  coarse-grained energy transfer $D'_{R,\delta}(t,\rho',x)$ can be divided to two terms, $D_{R,\delta,+}(t,p,\rho')$ and $D_{R,\delta,-}(t,p,\rho')$. The former decreases the system's energy, while the latter increases the system's energy. The upper bound of the energy increase is described by the inequality (\ref{eq: energy transfer error 2}).
		\end{lemma}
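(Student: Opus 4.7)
The plan is to derive both the decomposition and the bound by starting from the spectral representation of the coarse-grained dissipator on a $\delta$-stationary state, exploiting the fact that $\rho'(x)$ lives in a narrow energy window so that cross-frequency terms cancel. First, I would take Eq.~(\ref{eq: SD of cg energy transfer}) and use the identity $Z_{\beta,\alpha,\delta}(\omega,\omega',x)=\delta_{\omega,\omega'}\Tr(A_{\beta,\delta}(\omega,x)\rho'(x)A^\dagger_{\alpha,\delta}(\omega,x))$, which follows because $[H_\delta(x),A_{\alpha,\delta}(\omega,x)]=-\omega A_{\alpha,\delta}(\omega,x)$ together with $\rho'(x)=F_\delta(x)\rho'(x)F_\delta(x)$ forces $\omega=\omega'$. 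The double sum collapses and the time integral $\int_0^t ds\, e^{i(\omega'-\omega)s}$ becomes $t$, leaving a linear-in-$t$ expression that splits naturally into the positive-frequency sum $D_{R,\delta,+}$ and the negative-frequency sum $D_{R,\delta,-}$.

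For the positive part, I would observe that the matrix $\gamma(\omega)$ is positive semidefinite, so diagonalizing it as $\gamma(\omega)=V(\omega)\Lambda(\omega)V^\dagger(\omega)$ rewrites the $\alpha,\beta$ sum as a non-negative quantity of the form $\sum_i\lambda_i(\omega)\Tr(\tilde{A}_i(\omega)\rho'(x)\tilde{A}_i^\dagger(\omega))$. Multiplied by $\omega>0$ this is non-negative, establishing that $D_{R,\delta,+}\geq 0$ represents energy leaving the system.

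The main work is the lower bound on $D_{R,\delta,-}$. Following the diagonalization strategy sketched just above Lemma~\ref{lem: energy increase}, I would rescale and introduce $C_i(\omega)=\frac{1}{\sqrt{N}}\sum_\alpha v^*_{\alpha i}(\omega)A_\alpha$; the factor $1/\sqrt{N}$ combined with $\sum_\alpha|v_{\alpha i}(\omega)|^2=1$ and $\norm{A_\alpha}_\infty=1$ gives $\norm{C_i(\omega)}_\infty\leq 1$ by Cauchy--Schwarz. Using the projector onto the eigenspace of $H_\delta(x)$ at energy $x-\omega$, each operator-trace factor $\Tr(\Pi(x-\omega)C_i(\omega)\rho'(x)C_i^\dagger(\omega)\Pi(x-\omega))$ lies in $[0,1]$. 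Gershgorin's theorem bounds $\lambda_i(\omega)\leq\sum_j|\gamma_{i,j}(\omega)|$, and combining with the definition of $\tau_R$ in Eq.~(\ref{eq: tauR}) controls the total weight. The essential ingredient is the KMS-type detailed balance relation $\gamma_{\alpha,\beta}(T,J;-\omega)=e^{-\omega/T}\gamma_{\beta,\alpha}(T,J;\omega)$ for $\omega>0$ recorded in Eq.~(\ref{eq: dissipation reason}): it converts the negative-frequency weights into positive-frequency ones dressed by the Boltzmann factor $e^{\omega/T}$, producing a per-mode lower bound of the form $\frac{Nt}{2\tau_R}\omega e^{\omega/T}$ for $\omega<0$. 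Finally, minimizing $\omega e^{\omega/T}$ over $\omega<0$ by elementary calculus places the minimum at $\omega=-T$ with value $-T/e$, and integrating over $x$ against the probability density $p(x)$ preserves the bound, yielding $D_{R,\delta,-}(t,p,\rho')\geq -NtT/(2e\tau_R)$.

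The delicate point, and what I would check most carefully, is that the stated bound is linear in $N$ rather than growing with the number of negative frequency modes. This requires that the uniform bound $\Tr(\Pi(x-\omega)C_i(\omega)\rho'(x)C_i^\dagger(\omega)\Pi(x-\omega))\leq 1$ be applied in a way that effectively isolates the worst single $\omega$, rather than summing an unbounded contribution from each mode. Concretely, I expect one must bound the sum over $\omega<0$ by tracking that $\sum_{\omega<0}\Tr(\Pi(x-\omega)C_i(\omega)\rho'(x)C_i^\dagger(\omega)\Pi(x-\omega))\leq\Tr(\rho'(x))=1$ via completeness of the projectors $\Pi(x-\omega)$ across $\omega\in\Omega_\delta$, which is the step that prevents an otherwise spurious factor depending on the spectral multiplicity; making this interchange of sum and minimization rigorous is the main obstacle in carrying out the plan.
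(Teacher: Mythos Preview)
Your plan follows the paper's argument step for step: the collapse to diagonal $\omega=\omega'$ on $\delta$-stationary inputs, the $\pm$ split, diagonalization of $\gamma(\omega)$ with the Gershgorin bound, the KMS relation~(\ref{eq: dissipation reason}), and the minimization of $\omega e^{\omega/T}$ at $\omega=-T$ are exactly what the paper does. The paper is terse precisely where you worry: it bounds each trace factor by $1$, writes the per-$\omega$ inequality $D_{R,\delta,-}\geq \tfrac{Nt}{2\tau_R}\omega e^{\omega/T}$, and then passes directly to the minimum, without spelling out why the sums over $\omega<0$ and over $i=1,\ldots,N$ do not accumulate.

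Your completeness fix is the right idea and does go through; the obstacle you anticipate---that $C_i(\omega)$ depends on $\omega$---dissolves once you sum over $i$ first. Unitarity of $V(\omega)$ gives $\sum_i v^*_{\alpha i}(\omega)v_{\beta i}(\omega)=\delta_{\alpha\beta}$, so
\[
\sum_{i=1}^N \Tr\big(\Pi(x-\omega)C_i(\omega)\rho'(x)C_i^\dagger(\omega)\big)=\frac{1}{N}\sum_{\alpha} \Tr\big(\Pi(x-\omega)A_\alpha\rho'(x)A_\alpha\big),
\]
which is independent of the diagonalizing basis. Now completeness $\sum_{\omega\in\Omega_\delta}\Pi(x-\omega)=\openone$ gives $\sum_{\omega}\sum_i(\cdots)\leq \tfrac{1}{N}\sum_\alpha\Tr(A_\alpha\rho'(x)A_\alpha)\leq 1$, exactly as you conjectured. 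On the other hand, from KMS and either Gershgorin or the trace bound one has $\lambda_i(\omega)\leq e^{\omega/T}/(2\tau_R)$ for $\omega<0$, so $N\omega\lambda_i(\omega)\geq -NT/(2e\tau_R)$ uniformly in $i$ and $\omega$. Pairing this uniform lower bound with the unit total weight over $(\omega,i)$ and integrating against $p(x)$ yields~(\ref{eq: energy transfer error 2}). So your proposal is correct; the only missing observation is that the sum over $i$ removes the basis dependence before you invoke completeness in $\omega$.
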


		\section{Proof of Lemma \ref{lem:super} and Theorem \ref{the:complexity}}
		\label{app: proof of lemma1}
		In this section, we provide the proofs of Lemma \ref{lem:super} and Theorem \ref{the:complexity}. We also present the parameter update process, which brings them within their proper value ranges in $O(\log(N,b,r,h))$ times.

		\subsection{Proof of Lemma \ref{lem:super}}

		Lemma \ref{lem:super} provides the lower bound on the energy decrease of the system within time $t$ due to the super bath described by the spectral density $J_{\text{S}}(\omega)$. To prove this, we perform approximations step by step and calculate the error for each approximation.
		
		{\bf The first step,} replace the exact evolution with the Redfield equation evolution.
		For the term $\Tr(H\rho) - \Tr(H\mathcal{M}(H,T,g^2J_{\text{S}};t)\mathcal{G}_\sigma\rho)$ in Eq. (\ref{eq:dissipation_bound}), we use $\Tr(H\rho_R(t))$ to approximate $\Tr(H\mathcal{M}(H,T,g^2J_{\text{S}};t)\mathcal{G}_\sigma\rho)$, where ${\rho}_R(t)$ denotes the state derived from the Redfield equation in the interaction picture when the initial state is $\mathcal{G}_\sigma\rho$. We have
		\begin{eqnarray}
			\label{eq: proof progress 1}
			\Tr(H\rho) - \Tr(H\mathcal{M}(H,T,g^2J_{\text{S}};t)\mathcal{G}_\sigma \rho) 
			\geq \Tr(HG_\sigma\rho) - \Tr(H {\rho}_R(t)) - \abs{\Tr(H {\rho}_R(t))-\Tr(H {\rho}(t)}),
		\end{eqnarray}
		where we use $\Tr(H\rho) = \Tr(H\mathcal{G}_\sigma \rho)$, and ${\rho}(t)$ denotes the state derived from the exact evolution equation in the interaction picture when the initial state is $\mathcal{G}_\sigma\rho$. Note that the picture transformation does not change the energy expectation value.
		
		For the last term of the right-hand side in Eq. (\ref{eq: proof progress 1}), we have
		\begin{eqnarray}
			\abs{\Tr(H {\rho}_R(t))-\Tr(H {\rho}(t))} 
			\leq \norm{H({\rho}_R(t)-{\rho}(t))}_1 
			\leq  \norm{H}_\infty \norm{{\rho}_R(t)-{\rho}(t)}_1,
		\end{eqnarray}
		which can then be bounded using Corollary \ref{corollary: Redfield error} when $\tau_B(T,J_{\text{S}}) \leq t \leq \tau_R(T,J_{\text{S}})/g^2$.
		
		{\bf The second step,} neglect the higher-order terms of the dynamical map corresponding to the Redfield equation.
		For ${\rho}_R(t)$, similar to the Dyson series expansion, we have the integral form
		\begin{eqnarray}
			\rho_R(t) = \mathcal{G}_\sigma \rho + \int_0^t dt_1 \mathcal{K}(T,g^2J_{\text{S}};t_1) \mathcal{G}_\sigma \rho + \int_0^t dt_1 \mathcal{K}(T,g^2J_{\text{S}};t_1) \int_0^{t_1} dt_2 \mathcal{K}(T,g^2J_{\text{S}};t_2) \mathcal{G}_\sigma \rho + ...
		\end{eqnarray}
		Then, $Tr(HG_\sigma\rho) - \Tr(H {\rho}_R(t))$ can be expressed as
		\begin{eqnarray}
			\label{eq: proof progress 2}
			\Tr(HG_\sigma\rho) - \Tr(H {\rho}_R(t)) &=& -\Tr(H\int_0^t dt_1 \mathcal{K}(T,g^2J_{\text{S}};t_1) \mathcal{G}_\sigma \rho) \notag \\
			&-& \Tr(H\int_0^t dt_1 \mathcal{K}(T,g^2J_{\text{S}};t_1) \int_0^{t_1} dt_2 \mathcal{K}(T,g^2J_{\text{S}};t_2) \mathcal{G}_\sigma \rho) - ...
		\end{eqnarray}

		For the nth term in Eq. (\ref{eq: proof progress 2}), one can verify that
		\begin{eqnarray}
			\abs{\Tr(H\int_0^t dt_1 \mathcal{K}(T,g^2J_{\text{S}};t_1)\int_0^{t_1} dt_2 \mathcal{K}(T,g^2J_{\text{S}};t_2)... \int_0^{t_{n-1}}dt_n \mathcal{K}(T,g^2J_{\text{S}};t_n) \mathcal{G}_\sigma \rho)} \leq \norm{H}_\infty \frac{g^{2n}t^n}{n!\tau^n_R(T,J_{\text{S}})},
		\end{eqnarray}
		where we use $\norm{\mathcal{K}(T,g^2J_{\text{S}};t)\bullet}_1 \leq \frac{g^2}{\tau_R(T,J_{\text{S}})}\norm{\bullet}_1$ and inequality (\ref{eq: norm submultiplicative}). Then all higher-order terms in Eq. (\ref{eq: proof progress 2}) are bounded by
		\begin{eqnarray}
			&&\sum_{n=2}^\infty \abs{\Tr(H\int_0^t dt_1 \mathcal{K}(T,g^2J_{\text{S}};t_1)\int_0^{t_1} dt_2 \mathcal{K}(T,g^2J_{\text{S}};t_2)... \int_0^{t_{n-1}}dt_n \mathcal{K}(T,g^2J_{\text{S}};t_n) \mathcal{G}_\sigma \rho)} \notag \\
			&\leq& \norm{H}_\infty (e^{g^2t/\tau_R(T,J_{\text{S}})} - 1 - \frac{g^2t}{\tau_R(T,J_{\text{S}})}).
		\end{eqnarray}

		{\bf The third step,}  replace the initial state $\mathcal{G}_\sigma \rho$ with the corresponding $\delta-$steady state $\rho_s$. We approximate the first term on the right-hand side of Eq. (\ref{eq: proof progress 2}) by $D_R(T,g^2J_{\text{S}};t,\rho_s)= -\Tr(H\int_0^t dt_1 \mathcal{K}(T,g^2J_{\text{S}};t_1) \rho_s)$. We have
		\begin{eqnarray}
			- \Tr(H \int_0^t dt_1 \mathcal{K}(T,g^2J_{\text{S}};t_1) \mathcal{G}_\sigma \rho)  &\geq&  
			D_R(T,g^2J_{\text{S}};t,\rho_s)  - \abs{\Tr(H\int_0^t dt_1 \mathcal{K}(T,g^2J_{\text{S}};t_1)(\mathcal{G}_\sigma \rho-\rho_s))} \notag \\
			&\geq&  D_R(T,g^2J_{\text{S}};t,\rho_s)  - \norm{H}_\infty\frac{g^2t}{\tau_R(T,J_{\text{S}})}\norm{\mathcal{G}_\sigma \rho-\rho_s}_1.
		\end{eqnarray}
		The last term can then be bounded by Lemma \ref{lem: delta steady state}.
		
		{\bf The fourth step,} use the coarse-grained energy transfer $D_{R,\delta}(T,g^2J_{\text{S}}; t,p,\rho')$ to approximate $D_R(T,g^2J_{\text{S}};t,\rho_s)$ where
		\begin{eqnarray}
			D_{R,\delta}(T,g^2J_{\text{S}}; t,p,\rho') = - \int_{-\|H\|_\infty - \delta}^{\|H\|_\infty + \delta} dx \, p(x)\Tr(H_\delta(x) \int_0^t dt_1 \mathcal{K}_\delta(T,g^2J_{\text{S}};x,t_1) \rho'(x)).
		\end{eqnarray}
		According to Lemma \ref{lem: cg energy transfer}, we obtain
		\begin{eqnarray}
			D_R(T,g^2J_{\text{S}};t,\rho_s)
			\geq  D_{R,\delta}(T,g^2J_{\text{S}}; t,p,\rho')
			- g^2t\delta \Big(\frac{1}{\tau_R(T,J_{\text{S}})} + 2\Big(\frac{t}{\tau_R(T,J_{\text{S}})} + \frac{\tau_B(T,J_{\text{S}})}{\tau_R(T,J_{\text{S}})}\Big)\norm{H}_\infty \Big).
		\end{eqnarray}
		
		{\bf The fifth step,} use spectral decomposition to analyse $D_{R,\delta}(T,g^2J_{\text{S}}; t,p,\rho')$.
		According to the discussion in Appendix \ref{app: dissipation}, $D_{R,\delta}(T,g^2J_{\text{S}}; t,p,\rho')$ can be divided into two terms: 
		\begin{eqnarray}
			D_{R,\delta}(T,g^2J_{\text{S}}; t,p,\rho') = D_{R,\delta,+}(T,g^2J_{\text{S}}; t,p,\rho') + D_{R,\delta,-}(T,g^2J_{\text{S}}; t,p,\rho'),
		\end{eqnarray}
		where $D_{R,\delta,+}(T,g^2J_{\text{S}}; t,p,\rho')$ and $D_{R,\delta,-}(T,g^2J_{\text{S}}; t,p,\rho')$ represent the energy decrease and the energy increase of the system due to the super bath, respectively. Since the super bath can be divided into the good bath and the complementary bath, this decomposition implies that the effects of the good bath and the complementary bath on the system can also be divided into two parts: one corresponding to the energy decrease of the system and the other to the energy increase of the system. Thus, $D_{R,\delta,+}(T,g^2J_{\text{S}}; t,p,\rho')$ has a lower bound provided by the good bath. Considering that the good bath constitutes at least $1 / {b}$ of the "composition" of the super bath, we have
		\begin{eqnarray}
			D_{R,\delta,+}(T,g^2J_{\text{S}}; t,p,\rho') &\geq& \frac{1}{{b}}D_{R,\delta,+}(T,g^2J_{\text{g-sub}}; t,p,\rho') \geq \frac{1}{{b}}D_{R,\delta}(T,g^2J_{\text{g-sub}}; t,p,\rho') \notag \\
			&\geq& \frac{1}{{b}}D_{R}(T,g^2J_{\text{g-sub}}; t,\rho_s) - \frac{1}{{b}}\abs{D_{R,\delta}(T,g^2J_{\text{g-sub}}; t,p,\rho') - D_{R}(T,g^2J_{\text{g-sub}}; t,\rho_s)} \notag \\
			&\geq& Pt - \frac{g^2t\delta}{{b}} \Big(\frac{1}{\tau_R(T,J_{\text{g-sub}})} + 2\Big(\frac{t}{\tau_R(T,J_{\text{g-sub}})} + \frac{\tau_B(T,J_{\text{g-sub}})}{\tau_R(T,J_{\text{g-sub}})}\Big)\norm{H}_\infty \Big),
		\end{eqnarray}
		where $P$ is defined in Lemme \ref{lem:super}, and we use Lemma \ref{lem: cg energy transfer} again.
		
		The energy increase of the system due to the super bath, i.e., $D_{R,\delta,-}(T,g^2J_{\text{S}}; t,p,\rho')$, can then be bounded by Lemma \ref{lem: energy increase}. That is, we have
		\begin{eqnarray}
			D_{R,\delta,-}(T,g^2J_{\text{S}}; t,p,\rho') \geq -\frac{g^2 t N T}{2e\tau_R(T,J_{\text{S}})}
		\end{eqnarray}
		
		{\bf Finally,} combining these error terms, together with $\norm{H}_\infty \leq {h}$, we obtain the inequality
		\begin{eqnarray}
			\Tr(H\rho) - \Tr(H\mathcal{M}(H,T,g^2J_{\text{S}};t)\mathcal{G}_\sigma\rho) 
			\geq Pt - \epsilon,
		\end{eqnarray}
		when $\tau_B(T,J_{\text{S}}) \leq t \leq \tau_R(T,J_{\text{S}})/g^2$, where 
		\begin{eqnarray}
			\label{eq: expression of epsilon}
			\epsilon &=&  4e\frac{g^2\sqrt{\tau_B(T,J_{\text{S}})}\sqrt{t}}{\tau_R(T,J_{\text{S}})}{h} +  (e^{g^2t/\tau_R(T,J_{\text{S}})} - 1 - \frac{g^2t}{\tau_R(T,J_{\text{S}})}){h} \notag \\
			&+& g^2 t\delta \Big(\frac{1}{\tau_R(T,J_{\text{S}})} + 2\Big(\frac{t}{\tau_R(T,J_{\text{S}})} + \frac{\tau_B(T,J_{\text{S}})}{\tau_R(T,J_{\text{S}})}\Big){h} \Big) + {h} \frac{g^2t}{\tau_R(T,J_{\text{S}})}\Big(\frac{8\sqrt{2} \sigma ({h} + \delta)}{\sqrt{\pi}} e^{-\sigma^2 \delta^2} + 2 e^{-2\sigma^2 \delta^2}\Big) \notag \\
			&+& \frac{g^2t\delta}{{b}} \Big(\frac{1}{\tau_R(T,J_{\text{g-sub}})} + 2\Big(\frac{t}{\tau_R(T,J_{\text{g-sub}})} + \frac{\tau_B(T,J_{\text{g-sub}})}{\tau_R(T,J_{\text{g-sub}})}\Big){h} \Big) + \frac{g^2t N T}{2e\tau_R(T,J_{\text{S}})}.
		\end{eqnarray}
		
		\subsection{The proper range of parameter values and proof of Theorem \ref{the:complexity}.} 
		
		We then demonstrate that the inequality $\epsilon \leq P t / 2$ holds when the parameters $t$, $g$, $\delta$,  $\sigma$, and $T$ are within an appropriate range. This means that after every finite time interval $t$ of evolution, the system will dissipate at least $P t / 2$ amount of energy. Since the expression for $\epsilon$ contains a total of six terms, a simple idea is to make each term less than $c \equiv P t / 12$, thus ensuring that $\epsilon$ is less than $P t / 2$.

		In the following, we discuss the appropriate ranges of these parameters in four parts. Each part involves one or two parameters and the corresponding terms in the expression for $\epsilon$.

		{\bf First, the range of $T$.}
		
		The last term of the expression for $\epsilon$ is bounded by $c$ when the temperature $T$ is sufficiently low. Since 
		\begin{eqnarray}
			\label{eq: tauRTJS bound}
			\frac{1}{\tau_R(T,J_{\text{S}})} = 4\sum_{\alpha,\beta} \int_0^\infty ds \abs{C_{\alpha,\beta}(T,J_{\text{S}};s)} = 4 N \int_0^\infty ds \abs{C(T,\mathfrak{J};s)} =  \frac{N}{\tau_R(T,\mathfrak{J})} \leq \frac{N}{\tau_{R,m}},
		\end{eqnarray}
		where $\tau_{R,m} = \min\{\tau_R(T,\mathfrak{J}) \vert 
		T\in (0,1]\}$ is a computable positive number when a valid $\mathfrak{J}$ is determined, we let 
		
		\begin{eqnarray}
			\frac{g^2t N^2 T}{2e\tau_{R,m}} \leq \frac{P t}{12},
		\end{eqnarray}
		the last term of the expression for $\epsilon$ is less than $c$. The proper range of temperature $T$ is
		\begin{eqnarray}
			T \leq \frac{e\tau_{R,m}}{6N^2 b r}
		\end{eqnarray}

		{\bf Second, the range of $g$ and $t$.}
		
		Although the time $t$ can take values within an appropriate range to make sure that $\tau_B(T,J_{\text{S}}) \leq t \leq \tau_R(T,J_{\text{S}})/g^2$ holds, for simplicity, we choose:
		\begin{eqnarray}
			\label{eq: time step}
			t = \frac{\tau_R(T,J_{\text{S}})}{g}
		\end{eqnarray}
		in the following, and let $g < \min(1/2, \tau_R(T,J_{\text{S}})/\tau_B(T,J_{\text{S}}))$. Note that $\tau_R(T,J_{\text{S}})$ and $\tau_B(T,J_{\text{S}})$ are calculable when the temperature $T$ is given.
		
		For the second term of the expression for $\epsilon$, when $g^2 t / \tau_R(T, J_{\text{S}}) \leq \frac{1}{2}$, we have
		\begin{eqnarray}
			{h} \Big(e^{g^2t/\tau_R(T,J_{\text{S}})} - 1 - \frac{g^2t}{\tau_R(T,J_{\text{S}})}\Big)  \leq  {h}\frac{g^4 t^2}{\tau_R^2(T,J_{\text{S}})}.
		\end{eqnarray}
		where we use the inequality $e^x - 1 - x \leq x^2$ for $0 \leq x \leq 1$.
		One then can verify that the right-hand side term will be less than $c$ when $g < \tau_R(T, J_{\text{S}}) / 12{b}{r}{h}$.  
		
		The first term of the expression for $\epsilon$ can be bound by $c$ when $g < \tau^{3}_R(T, J_{\text{S}}) / 2304 e^2 {b}^2{r}^2{h}^2\tau_B(T,J_{\text{S}})$. We summarize the inequalities concerning $g$ as follows:
		\begin{eqnarray}
			g &<& \min\Big(\frac{1}{2}, \frac{\tau_{R,m}}{N\tau_B(T,\mathfrak{J})}, \frac{\tau_{R,m}}{12N  {b}{r}{h}}, \frac{\tau_{R,m}^3(\mathfrak{J})}{2304 e^2  N^3{b}^2{r}^2{h}\tau_B(T,\mathfrak{J})}\Big),
		\end{eqnarray}
		where we use
		\begin{eqnarray}
			\frac{\tau_B(T,J_{\text{S}})}{\tau_R(T,J_{\text{S}})} = 4\sum_{\alpha,\beta}\int_0^\infty ds s\abs{C_{\alpha,\beta}(T,J_{\text{S}};s)} = 4N \int_0^\infty ds s \abs{C(T,\mathfrak{J}; s)} = N \frac{\tau_B(T,\mathfrak{J})}{\tau_R(T,\mathfrak{J})},
		\end{eqnarray}
		and Eq. (\ref{eq: tauRTJS bound}). Note that $\tau_B(T,\mathfrak{J})$ is computable when temperature $T$ is determined.

		{\bf Third, the range of $\delta$.}
		
		For the third and fifth terms of the expression for $\epsilon$, as long as $\delta$ is sufficiently small, these terms will be less than $c$. More specifically, let
		\begin{eqnarray}
			\label{eq: delta_1}
			\delta \Big(\frac{N}{\tau_{R,m}} + 2\Big(\frac{1}{g} + N\frac{\tau_B(T,\mathfrak{J})}{\tau_{R,m}}\Big){h} \Big) \leq \frac{1}{12{b}{r}},
		\end{eqnarray}
		and 
		\begin{eqnarray}
			\label{eq: delta_2}
			\delta \Big(\frac{{b}}{\tau_{R,m}} + 2\Big(\frac{{b} \tau_{R,M}}{gN\tau_{R,m}} + \frac{{b}^2 \tau_B(T,\mathfrak{J})}{\tau_{R,m}}\Big){h} \Big) \leq \frac{1}{12{r}},
		\end{eqnarray}
		these terms are less than $c$, where we use Eq. (\ref{eq: 4}), Eq. (\ref{eq: 5}), and Eq. (\ref{eq: tauRTJS bound}). 
		
		{\bf Fourth, the range of $\sigma$.}
		
		The third term of the expression for $\epsilon$ will be less than $c$ as long as $\sigma$ is large enough since it decreases exponentially with the increase of $\sigma$. More specifically, let $\sigma \geq \Big(\frac{\sqrt{\pi}}{4\sqrt{2}}\Big)^{1/3}\frac{1}{\delta}\Big)$ to make sure that $2e^{-2\sigma^2\delta^2} \leq \frac{8\sqrt{2}\delta\sigma}{\sqrt{\pi}}e^{-\sigma^2\delta^2}$, the inequality can be replaced by a loose form
		\begin{eqnarray}
			\frac{g^2t N}{\tau_{R,m}} \frac{16\sqrt{2}\sigma{h}^2}{\sqrt{\pi}}e^{-\sigma^2\delta^2} \leq c
		\end{eqnarray}
		under an evidently valid condition that $2\delta \leq {h}$. Using $e^{-x} < \frac{1}{x}$ when $x>0$, We obtain the range of values for $\sigma$
		\begin{eqnarray}
			\sigma > \max{\Big(\frac{192\sqrt{2}  N  {b}{r}{h}^2}{\sqrt{\pi}\delta^2\tau_{R,m}}, \Big(\frac{\sqrt{\pi}}{4\sqrt{2}}\Big)^{1/3}\frac{1}{\delta}\Big)}.
		\end{eqnarray}

		\vspace{\baselineskip}
		
		The above results are summarized as follows:
		
		\begin{eqnarray}
			\label{eq: T range}
			T &\leq& \frac{e\tau_{R,m}}{6N^2 b r} , \\
			\label{eq: g range}
			g &<& \min\Big(\frac{1}{2}, \frac{\tau_{R,m}}{N\tau_B(T,\mathfrak{J})}, \frac{\tau_{R,m}}{12N  {b}{r}{h}}, \frac{\tau_{R,m}^3}{2304 e^2  N^3{b}^2{r}^2{h}\tau_B(T,\mathfrak{J})}\Big) , \\
			\delta &<& \min{(\delta_1,\delta_2)},  \\
			\label{eq: sigma range}
			\sigma &>& \max{\Big(\frac{192\sqrt{2}  N  {b}{r}{h}^2}{\sqrt{\pi}\delta^2\tau_{R,m}}, \Big(\frac{\sqrt{\pi}}{4\sqrt{2}}\Big)^{1/3}\frac{1}{\delta}\Big)},
		\end{eqnarray}
		where
		\begin{eqnarray}
			\delta_1 &=& \Big( 12{b}{r}\Big(\frac{N}{\tau_{R,m}} + 2\Big(\frac{1}{g} + N\frac{\tau_B(T,\mathfrak{J})}{\tau_{R,m}}\Big){h} \Big)\Big)^{-1}, \\
			\delta_2 &=& \Big(12{b}{r}\Big(\frac{1}{\tau_{R,m}} + 2\Big(\frac{ \tau_{R,M}}{gN\tau_{R,m}} + \frac{{b} \tau_B(T,\mathfrak{J})}{\tau_{R,m}}\Big){h} \Big)\Big)^{-1},
		\end{eqnarray}
		and the time $t$ is decided in Eq. (\ref{eq: time step}). With Lemma \ref{lem:super} and these inequalities, Theorem \ref{the:complexity} is almost self-evident.

		\subsection{The iteration method of parameters}
		\label{app: iteration method}
		We now introduce the method for parameter iteration. Note that the functions limiting the parameter ranges (i.e., the right-hand side of inequalities (\ref{eq: g range}) - (\ref{eq: sigma range}) are polynomial functions of parameters $N,b,r,h,T$, one might consider iterating the parameters through simple halving (doubling). However, since the temperature $T$ also appears in these functions and needs to be adjusted during the iteration process to satisfy its corresponding inequality, the parameter update method requires careful consideration.
		
		Suppose the initial parameters: $T=T_0$, $g = g_0$, $\delta = \delta_0$, $\sigma = \sigma_0$. Set $g_0 < \min (1/2,\tau_R(T_0,J_{\text{S}})/\tau_B(T_0,J_{\text{S}}))$ to make sure that $\tau_B(T_0,J_{\text{S}}) < t_0 < \tau_R(T_0,J_{\text{S}})$ holds, where $t_0 = \tau_R(T_0,J_{\text{S}}) / g$. Given that the limiting functions for the ranges of these parameters (i.e., $g,\delta,\sigma,T$) vary polynomially with $T$, we can exponentially adjust the parameters step by step, nabling them to quickly fall within the appropriate range. The only difficulty is that $\tau_B(T,\mathfrak{J})$ may increase with $\mathrm{Poly}(1/T)$. In the following, we discuss the iterative approach for each parameter individually.
		
		The temperature $T$ is refreshed by $T \leftarrow T/2$ each time in the parameter loop, which is enough to make the inequality (\ref{eq: T range}) hold in $O(\log(N,b,r))$ times.
		
		The dimensionless rescaling factor $g$ is refreshed by $g \leftarrow  \lambda_1 g/2$, where
		
		\begin{eqnarray}
			\lambda_1 = \min\Big(1, \frac{\tau_B(T,\mathfrak{J})}{\tau_B(\frac{T}{2},\mathfrak{J})}\Big).
		\end{eqnarray}
		
		For inequality (\ref{eq: delta_1}), we define auxiliary function 
		
		\begin{eqnarray}
			f_1(g,T) = \frac{N}{\tau_{R,m}} + 2\Big(\frac{1}{g} + N\frac{\tau_B(T,\mathfrak{J})}{\tau_{R,m}}\Big){h}.
		\end{eqnarray}
		The inequality (\ref{eq: delta_1}) can be satisfied in $O(\log(N,b,r,h))$ iteration times if we refresh $\delta$ by $\lambda_2 \delta / 2$, where
		\begin{eqnarray}
			\lambda_2 = \min\Big(1,\frac{f_1(g,T)}{f_1(\frac{\lambda_1 g}{2},\frac{T}{2})}\Big).
		\end{eqnarray}

		Similarly, define auxiliary function 
		\begin{eqnarray}
			f_2(g,T) = \frac{1}{\tau_{R,m}} + 2\Big(\frac{\tau_{R,M}}{gN\tau_{R,m}} + \frac{{b} \tau_B(T,\mathfrak{J})}{\tau_{R,m}}\Big){h}, 
		\end{eqnarray}
		the inequality (\ref{eq: delta_2}) can be satisfied in $O(\log(N,b,r,h))$ iteration times if we refresh $\delta$ by $\lambda_3 \delta / 2$, where
		
		\begin{eqnarray}
			\lambda_3 = \min\Big(1,\frac{f_2(g,T)}{f_2(\frac{\lambda_1 g}{2},\frac{T}{2})}\Big)
		\end{eqnarray}
		
		Set $\lambda_4 = \min(\lambda_2,\lambda_3)$, the inequalities (\ref{eq: delta_1}) and (\ref{eq: delta_2}) can be satisfied simultaneously in $O(\log(N,b,r,h))$ iteration times if we refresh $\delta$ by $\lambda_4 \delta / 2$.
		
		To make sure that the inequality (\ref{eq: sigma range}) holds in $O(\log(N,b,r,h))$ iteration times, we refresh $\sigma$ by $2\lambda_5 \sigma$, where $\lambda_5 = 4/\lambda_4^2$.

		\section{An example of proper constraint function}
		\label{app: example}
		
		In this section, we present an proper constraint function and calculate a finite range of values for $\tau_R$ and $\tau_B$. Consider a constraint function that describes a spectral density with an exponential cutoff, that is 
		\begin{eqnarray}
			\label{eq: expcut J}
			\mathfrak{J(\omega)} = \frac{\omega^s}{\Omega^{s-1}} e^{-\omega/\Omega},
		\end{eqnarray}
		where $\Omega$ is the cutoff frequency and parameter $s$ determines the low-frequency behaviour of $\mathfrak{J(\omega)}$. For a spectral density of the form given by the Eq. (\ref{eq: expcut J}), one calls couplings with:
		\begin{eqnarray}
			&&s < 1: \qquad \text{sub-Ohmic} \notag \\
			&&s = 1: \qquad \text{Ohmic}   \notag \\
			&&s > 1: \qquad \text{sup-Ohmic}.
		\end{eqnarray}
		
		Calculating the correlation function $C(T,\mathfrak{J})$ in the form of Eq. (\ref{eq:correlationGB}) with $\mathfrak{J(\omega)}$ given in Eq. (\ref{eq: expcut J}) yields 
		\begin{eqnarray}
			C(T,\mathfrak{J}; t) = \frac{T^{s+1}\Gamma(s+1)}{\Omega^{s-1}}\Big[b\Big(s+1, \frac{1+\Omega/T-i\Omega t}{\Omega/T}\Big) + b\Big(s+1, \frac{1+i\Omega t}{\Omega/T}\Big)\Big],
		\end{eqnarray}
		where $\Gamma(s)$ is the Gamma function and 
		\begin{eqnarray}
			b(z,u) = \sum_{n=0}^\infty \frac{1}{(n+u)^z}, u \neq 0, -1, -2, \cdots
		\end{eqnarray}
		is the generalized Zeta function.
		
		Take $s = 3$, one can verify that the modulus of $C(T,\mathfrak{J}; t)$ satisfies
		\begin{eqnarray}
			\label{eq: example Ct}
			\abs{C(T,\mathfrak{J}; t)} &\leq& \frac{6T^4}{\Omega^2}\Big(\sum_{n=0}^\infty\frac{1}{\big(\big(n+\frac{1+\Omega/T }{\Omega/T}\big)^2 + t^2T^2\big)^2} + \sum_{n=0}^\infty \frac{1}{\big(\big(n+\frac{T}{\Omega}\big)^2 + t^2T^2\big)^2} \Big) \notag \\
			&\leq& \frac{6T^4}{\Omega^2} \Big(\frac{1}{\big(\frac{T^2}{\Omega^2} + t^2T^2\big)^2} + \sum_{n=1}^\infty \frac{2}{\big(n^2+t^2T^2\big)^2}\Big).
		\end{eqnarray}
		Thus, the integral of the modulus of $C(T,\mathfrak{J}; t)$ has an upper bound
		\begin{eqnarray}
			\int_0^\infty \abs{C(T,\mathfrak{J}; t)}dt \leq \frac{3\pi}{2}\Omega + \frac{6\pi T^3}{\Omega^2},
		\end{eqnarray}
		which means the inverse of $\tau_R(T,\mathfrak{J})$ has a finite upper bound.
		
		According to Eq. (\ref{eq:correlationGB}), we can calculate a lower bound of integral of the modulus of $C(T,\mathfrak{J}; t)$ by
		\begin{eqnarray}
			\int_0^\infty\abs{C(T,\mathfrak{J}; t)} dt \geq \int_0^\infty dt \abs{\int_0^\infty d\omega \mathfrak{J}(\omega) \sin{\omega t}} \geq 3\Omega,
		\end{eqnarray} 
		which means the inverse of $\tau_R(T,\mathfrak{J})$ has a finite lower bound. Together, $\tau_R(T,\mathfrak{J})$ is a finite value when temperature $T$ is finite.
		
		Furthermore, according to Eq. (\ref{eq: example Ct}), one can verify that the integral
		\begin{eqnarray}
			\int_0^\infty t\abs{C(T,\mathfrak{J}; t)} dt \leq 3 + \frac{\pi^2 T^2}{\Omega^2}
		\end{eqnarray}
		is finite. Thus, according to the definition of $\tau_B$ in Eq. (\ref{eq: tauB}), $\tau_B(T,\mathfrak{J})$ is also finite.
		
	\end{widetext}
	\bibliography{ref}
\end{document}